\DeclareMathAlphabet{\mathpzc}{OT1}{pzc}{m}{it}
\newcommand{\ignore}[1]{{}}
\newcommand{\syntaxDef}[3]{\rulebox{\syntaxKeyword$#1\mathrel{::=}{#2}$ \ifthenelse{\equal{#3}{}}{}{[#3]}}}
\newcommand{\shorteq}{\settowidth{\@tempdima}{-}\resizebox{\@tempdima}{\height}{=}}
\preto\tabular{\setcounter{magicrownumbers}{0}}
\newcounter{magicrownumbers}
\newcommand{\sqir}{\textsc{sqir}\xspace}
\newcommand{\myparagraph}[1]{\paragraph{\textbf{#1.}}}
\tikzset{ machine/.style={
rectangle,
minimum width=25mm,
    minimum height=18mm,
    text width=24mm,
align=center,
very thick,
    draw=black,
color=black,
    fill=white,
}
}
\DeclarePairedDelimiter\abs{\lvert}{\rvert}
\DeclarePairedDelimiter\norm{\lVert}{\rVert}
\let\oldabs\abs
\def\abs{\@ifstar{\oldabs}{\oldabs*}}
\let\oldnorm\norm
\def\norm{\@ifstar{\oldnorm}{\oldnorm*}}
\DeclareRobustCommand{\vardivision}{\mathbin{\mathpalette\@vardivision\relax}}
\newcommand{\@vardivision}[2]{\reflectbox{$\m@th\smallsetminus$}}
\definecolor{ltblue}{rgb}{0,0.4,0.4}
\definecolor{dkblue}{rgb}{0,0.1,0.6}
\definecolor{dkgreen}{rgb}{0,0.35,0}
\definecolor{dkviolet}{rgb}{0.3,0,0.5}
\definecolor{dkred}{rgb}{0.5,0,0}
\newcommand{\rulelab}[1]{{\small \textsc{#1}}}
\let\Alpha=A
\let\Beta=B
\let\Epsilon=E
\let\Zeta=Z
\let\Eta=H
\let\Iota=I
\let\Kappa=K
\let\Mu=M
\let\Nu=N
\let\Omicron=O
\let\omicron=o
\let\Rho=P
\let\Tau=T
\let\Chi=X
  \newcommand{\fixme}[1]{\textbf{\textcolor{red}{[ Fixme: #1]}}}
  \newcommand{\todo}[1]{\textbf{\textcolor{green}{[ TODO: #1 ]}}}
  \newcommand{\mwh}[1]{\textbf{\textcolor{red}{[ Mike: #1 ]}}}
\newcommand{\shh}[1]{\textbf{\textcolor{purple}{[ Shih-Han: #1 ]}}}
  \newcommand{\liyi}[1]{\textbf{\textcolor{blue}{[ Liyi: #1 ]}}}
  \newcommand{\oth}[2]{\textbf{\textcolor{red}{[ #1: #2 ]}}}
  \newcommand{\xwu}[1]{\textbf{\textcolor{purple}{[ Xiaodi: #1 ]}}}
  \newcommand{\fixme}[1]{}
  \newcommand{\todo}[1]{}
  \newcommand{\rnr}[1]{}
  \newcommand{\mwh}[1]{}  
\newcommand{\liyi}[1]{}
  \newcommand{\shh}[1]{}
  \newcommand{\xwu}[1]{}
  \newcommand{\oth}[2]{}
\colorlet{MZ}{violet!60!pink}
\newcommand{\mz}[1]{{\textcolor{MZ}{\textbf{[[}(\Walley[1.6][white!50!orange]) {\small{#1}}\textbf{]]}}}}
\newcommand{\was}[1]{}
\NewCommandCopy{\Creff}{\Cref}
\renewcommand{\Cref}[1]{\mbox{\Creff{#1}}}
\colorlet{lc}{teal}
\newcommand\wideparen[1]{\tikz[baseline=(wideArcAnchor.base)]{
    \node[inner sep=0] (wideArcAnchor) {$#1$}; 
    \coordinate (wideArcAnchorA) at ($0.9*(wideArcAnchor.north west) + 0.1*(wideArcAnchor.north east)+(0.0em,0.75ex)$);
    \coordinate (wideArcAnchorB) at ($0.1*(wideArcAnchor.north west) + 0.9*(wideArcAnchor.north east)+(0.0em,0.75ex)$);
\draw[line width=0.1ex,line cap=round] 
        ($(wideArcAnchor.north west)+(0.0em,0.1ex)$) 
            .. controls (wideArcAnchorA) and (wideArcAnchorB) ..
        ($(wideArcAnchor.north east)+(0.0em,0.1ex)$)        
    ;
}}
\newcommand{\pard}[2]{#1\texttt{,}\,#2}
\newcommand{\parp}[2]{#1\,+\,#2}
\newcommand{\thickhline}{\noalign {\ifnum 0=`}\fi \hrule height 1pt
    \futurelet \reserved@a \@xhline
}
\newcolumntype{"}{@{\hskip\tabcolsep\vrule width 1pt\hskip\tabcolsep}}
\newcommand{\cmsg}[1]{\wideparen{#1}}
\newcommand{\comp}[1]{\texttt{!}#1}
\newcommand{\downa}[1]{\nu \,#1 \texttt{.}}
\newcommand{\upa}[1]{\uparrow #1 \texttt{.}}
\newcommand{\encode}[2]{#1\triangleleft#2\texttt{.}}
\newcommand{\encodes}[3]{#1\texttt{!}#2(#3)\blacktriangleright}
\newcommand{\decode}[2]{#1\triangleright\texttt{(}#2\texttt{)}\texttt{.}}
\newcommand{\decodes}[2]{#1\texttt{!}(#2)\blacktriangleleft}
\newcommand{\msga}[2]{#1\texttt{?}(#2){\tiny \succ}}
\newcommand{\up}[1]{\dagger#1}
\newcommand{\csenda}[2]{#1\texttt{!}#2\texttt{.}}
\newcommand{\creva}[2]{#1\texttt{?(}#2\texttt{)}\texttt{.}}
\newcommand{\scell}[1]{\{\!| #1 |\!\}}
\newcommand{\darrow}[1]{\longrightarrow_{\{#1\}}}
\newcommand{\bscell}[1]{\big{\{}\!\!\big{|} #1 \big{|}\!\!\big{\}}}
\newcommand{\seq}[2]{#1 #2}
\newcommand{\parll}[3]{#1\,{|\![}{#2}{]\!|}\,#3}
\newcommand{\Ls}{\mathbb{L}}
\newcommand{\Cs}{\mathcal{C}}
\newcommand{\As}{\mathcal{A}}
\newcommand{\Tts}{\mathcal{T}}
\newcommand{\Qs}{\mathcal{Q}}
\newcommand{\cn}[1]{\texttt{#1}}
\title{The Quantum Abstract Machine {\large (Extended Version)}}
\author{Liyi Li}{Iowa State University }{liyili2@iastate.edu}{https://orcid.org/0000-0001-8184-0244}{}
\author{Le Chang}{University of Maryland}{lchang21@umd.edu}{}{}
\author{Rance Cleaveland}{University of Maryland}{rance@cs.umd.edu}{}{}
\author{Mingwei Zhu}{University of Maryland}{mzhu1@umd.edu}{}{}
\author{Xiaodi Wu}{University of Maryland}{xwu@cs.umd.edu}{https://orcid.org/0000-0001-8877-9802}{}
\authorrunning{L. Li and L. Chang and R. Cleaveland and M. Zhu and X. Wu} 
\keywords{Quantum Computing, Process Calculi, Abstract Machine, Quantum Internet}
\begin{document}

\title{The Quantum Abstract Machine}

\maketitle

\begin{abstract}
This paper develops a model of quantum behavior that is intended to support the abstract yet accurate design and functional verification of quantum communication protocols. The work is motivated by the need for conceptual tools for the development of quantum-communication systems that are usable by non-specialists in quantum physics while also correctly capturing at a useful abstraction the underlying quantum phenomena. Our approach involves defining a quantum abstract machine (QAM) whose operations correspond to well-known quantum circuits; these operations, however, are given direct abstract semantics in a style similar to that of Berry's and Boudol's Chemical Abstract Machine. This paper defines the QAM's semantics and shows via examples how it may be used to model and reason about existing quantum communication protocols.
\end{abstract}

\section{Introduction} \label{sec:introduction}

Quantum computers offer unique capabilities that can be used to
implement substantially faster algorithms than those classical computers are capable of. For example, Grover's search algorithm \cite{grover1996,grover1997} can query unstructured data in sub-linear time (compared to linear time on a classical computer), and Shor's algorithm \cite{shors} can factor integers in polynomial time (no existing classical algorithm for this problem is polynomial-time).
Quantum computing, via mechanisms such as \emph{quantum teleportation}~\cite{PhysRevLett.70.1895,Rigolin_2005},
also supports secure communication capabilities that can transmit information without the possibility of eavesdropping.
These capabilities of quantum computation are motivating researchers to build hybrid classical-quantum communication networks (HCQNs), including so-called \emph{quantum internets}~\cite{Kimble2008},
that integrate quantum computing and existing classical network facilities to provide secure information transmission, with several HCQN protocols being proposed \cite{8068178,https://doi.org/10.48550/arxiv.2205.08479,10.1145/3387514.3405853,e24101488}.

The growing interest in quantum communication systems has also spurred work on formal approaches to modeling and verifying quantum communication protocols.
Frameworks \cite{10.1145/1040305.1040318,10.1145/977091.977108,10.1145/1507244.1507249} based on process algebra have been proposed to model HCQN systems particularly.
These frameworks have generally focused on enhancing existing message-passing process algebras \cite{MILNER19921,DBLP:conf/concur/Sangiori93} with notations for describing \emph{quantum circuits}~\cite{mike-and-ike} and their physical behaviors based on precise linear-algebraic quantum-state interpretations.
The virtue of these approaches is their fidelity to the semantics of quantum states, but with two major issues.
The precise quantum interpretations require classical resources exponential in their quantum descriptions, so the automated verification procedures for such descriptions moreover must confront problems of combinatorial explosion.
Additionally, the message passing model, which the process algebras are based on, can fall afoul of quantum physical realities, such as no relocations or remote controls of quantum qubit resources, so defining protocols in these frameworks requires a significant background in quantum physics to employ;
otherwise, users might define semantically sound protocols that are physically impossible to construct. This fact limits their use to system modelers well-versed in quantum physics.

To address these issues, we introduce the \emph{Quantum Abstract Machine} (QAM),
inspired by the classical \emph{Chemical Abstract Machine}~\cite{BERRY1992217} (CHAM) and Linda~\cite{1663305},
as a basis for the abstract, yet accurate, modeling of quantum communication.
Our approach involves identifying a set of action primitives for modeling HCQN behavior and then giving them direct operational semantics rather than interpreting them in the traditional framework as unitary matrices of complex numbers.
In addition, HCQN behaviors show different functional behaviors from traditional network protocols, which direct the QAM's design.
This yields semantics that abstracts away many low-level details while preserving essential functional HCQN behaviors and physical realities,

While channels in classical networks are identifiers, in HCQNs, quantum channels have a limited lifetime and act as the carriers for conveying quantum messages;
thus, they must be properly created before a quantum message can be delivered, with new operations to connect them with quantum messages in the QAM.
Secondly, the no-cloning theorem \cite{mike-and-ike} indicates that the delivery of a quantum message cannot simply copy the message to the other party,
i.e., once a quantum message is delivered, the sender no longer has the message, which contravenes classical message passing.
Thirdly, as we mentioned above, the physical reality of quantum computing is that quantum qubit resources are localized,
so they cannot be relocated to distinct locations, and there is very limited way, if it is no way, to remotely manipulate quantum resources,
This indicates a regional concept for quantum resources, and we adopt the membrane concept from CHAM to model it.
Finally, HCQNs permit the connections between quantum and classical messages,
e.g., a classical message must be measured out from a quantum qubit in quantum teleportation \cite{PhysRevLett.70.1895},
which involves the nondeterministic generation of values. However, the nondeterminism is not completely random.
Instead, in quantum teleportation, the result of a quantum message measurement creates a pair of classical and quantum residues, and they perform like a key lock pair,
so we can recover the original quantum message by combining the right two residues, as if we match the right key and lock.
All these aspects and the QAM's design are discussed throughout the paper, and the following is our list of contributions.

\begin{itemize}
\item We develop the QAM, with its syntax and semantics (\Cref{sec:sem}), suitable for defining HCQN protocols. To the best of our knowledge, this is the first abstract semantics for describing HCQN behaviors without the involvement of quantum circuits.

\item We carefully employ HCQN properties and physical limitations in the QAM's design so the definable QAM protocols correctly respect these properties and limitations. The QAM helps explore many HCQN properties, such as \Cref{def:no-cloning-good} and \Cref{def:decoding-good}.
  
\item We extend the QAM to describe and evaluate real-world multi-location HCQN protocols (\Cref{sec:case-study}), demonstrating a certain probability of failure. To demonstrate the extendability and utility of the QAM, we utilize trace-refinement in the evaluation framework to reason about the similarity of two HCQN protocols, such as QPASS and QCAST \cite{10.1145/3387514.3405853}.

\end{itemize}

\was{
  were quantum imitations of process
  algebra~\cite{baeten1991process,milner1989communication}, and obtained by
  enhancing a classical process-algebra framework with quantum-circuit
  operations whose semantics is based on interpreting the process-circuit operations over quantum states.

  However, prior frameworks amalgamating linear algebraic quantum circuit
  semantics and process algebra are impractical and unintuitive when it
  comes to verification.
In the classical computing scenario, process algebra suits well with
  automata-based model checking because program semantics can be encoded as state
  transitions.
While in the quantum setting, the verification frameworks become hard to
  define with the presence of matrices for quantum state interpretation.
\mz{I'm not sure if this is convincing, is there any other approach?}
Additionally, those frameworks require the user to assemble communication
  primitives with quantum circuit primitives.
This amounts to defining a full-fledged quantum program that digresses users from
  the original course: defining a communication protocol.
\mz{I'm not familiar with this topic, not sure how to make it convincing.}

  There are two major issues associated with the previous frameworks.
First, the beauty of process algebra is to define a mathematical model that
  captures the essence of multi-threaded program semantics; so that program executions are easily viewed as automata transitions -- usually associated with
  automata-based model-checking mechanisms.
Incorporating quantum-circuit languages significantly complicates the program
  semantics of the previous frameworks. Thus, the associated verification
  framework is unnecessarily complicated.
Second, quantum circuit semantics is unintuitive, and incorporating quantum
  semantics with process algebra makes the system even more unintuitive.
Eventually, programmers, who might have a brief idea about quantum computation,
  need to use a given protocol framework to define HCQN network protocols.
If they spent most of the time working out how the unintuitive framework works,
  why will they use the framework?
Unfortunately, previous HCQN frameworks had unintuitive program semantics
  because they put quantum circuit languages together with
  multi-threaded process algebra together with no chemistry.
}

\ignore{
\liyi{move the beginning of section 3 to intro.}
First, there are two main tasks in sending quantum messages, message transmissions and deliveries.
The former uses quantum swaps to convey a message from one node to another which is closer to the final destination and connectivity is usually represented as a graph structure such as the one in \Cref{fig:q-pi-example}.
During the procedure, the actual quantum swap circuit detail is less interesting than the cost analysis, where each two-node message transmission costs each node one qubit resource, because how swaps are constructed in the circuit is almost identical in different protocols.
We also use a quantum teleportation circuit for delivering a message. Again, the more interesting analysis is about different guarantees a network protocol can provide in the message deliveries rather than the circuit detail.

Next, many quantum message-sending guarantees are time-sensitive and related to probabilities that are time-sensitive.
Thus, we model in QAM a global clock and each task step, a message transmission or delivery, costs a one-time slot,
and every message is associated with a probability value defining the success likelihood of delivering the message.
For example, sending a message from \cn{Cat} to \cn{Dan} in \Cref{fig:q-pi-example} needs an intermediate transmission node $r_1$, and the likelihood of successfully delivering the message is lower than the case if we can directly send the message from \cn{Cat} to \cn{Dan} without an intermediate step. In addition, qubits might be decohered after a certain period of time, so a property we can define,
based on the QAM system, is to guarantee that every delivered message is within a threshold period.
}
 
\section{Overview}
\label{sec:overview}

Many previous quantum process calculi, based on the traditional message passing models,
are incapable of describing HCQN behaviors precisely, e.g., some might be contrary to physical realities,
so programmers might not know if the protocols defined in these systems are physically implementable.
We provide an overview of the QAM designs to unveil these HCQN properties that are hard to capture by message-passing models and show the QAM's design that better represents these properties.
We begin with an example (\Cref{sec:introduction-ex}) showing the abstraction in the QAM.
Essentially, QAM refers to programs as a QAM \emph{configuration} ($\overline{P}$/$\overline{Q}$ in \Cref{fig:q-pi-syntax}) \footnote{In this paper, if \(S\) is a set, then we use $\overline{S}$ to denote the set of multisets over $S$.}, a multiset of membranes; and a program transition in the QAM is a labeled transition system $\overline{P} \xrightarrow{\kappa} \overline{Q}$, with label $\kappa$ introduced in \Cref{fig:q-pi-definition}.
(Background about quantum computing and HCQNs are in \Cref{sec:background}.)

\vspace*{-0.8em}
\subsection{Motivating Example and the QAM Data Design} \label{sec:introduction-ex}
\vspace*{-0.3em}

\begin{figure}[t]
\vspace*{-1.5em}
{\centering
\begin{minipage}[b]{.33\textwidth}
                 \includegraphics[width=1.06\textwidth]{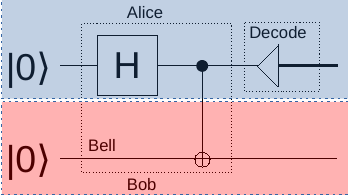}
            \caption{Quantum Channel Circuit}
            \label{fig:channel-circuit-example}
 \end{minipage}
\hfill{}
\hfill{}
\begin{minipage}[b]{.36\textwidth}
                 \includegraphics[width=0.97\textwidth]{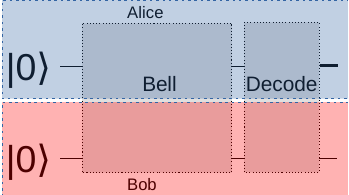}
            \caption{Quantum Channel Task Block}
            \label{fig:channel-block-example}
 \end{minipage}
\hfill{}
\begin{minipage}[b]{.25\textwidth}
                 \includegraphics[width=1\textwidth]{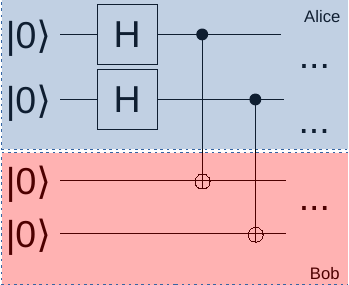}
            \caption{Multi Qubit Quantum Channel}
            \label{fig:multi-channel}
 \end{minipage}
}
\vspace*{-1.5em}
\end{figure}

The primary QAM design goal is to provide a high-level description of HCQN behaviors by abstracting away many quantum circuit-level details based on unitary and density matrix semantics.
To do so, we model the QAM operations based on the task-block diagram concept, which describes many quantum algorithms \cite{mike-and-ike}, to provide a high-level perspective of HCQN protocols.
In the motivated bit-commitment protocol \cite{10.1145/1040305.1040318} (circuit diagram in \Cref{fig:channel-circuit-example}),
Alice and Bob first collaboratively create a quantum channel (Bell pair) connecting them, and then Alice decodes the channel by measuring it.
This results in a classical and quantum residue for Alice and Bob, respectively.
The region for each party is distinguished by color: \textcolor{blue}{blue} for Alice and \textcolor{red}{red} for Bob.
Its task-block diagram (\Cref{fig:channel-block-example}) contains two tasks that form two task-blocks--it connects Alice and Bob through a channel creation followed by a decoding procedure,
which results in the two residues above.

In the QAM, we try to find a succinct set of functional common patterns, such as the Bell pair for channel creation, and abstract away the circuit-level details by utilizing these common patterns, constantly appearing as task blocks in previous works, as our operations.

\vspace*{-0.7em}
\begin{example}[Bit-Commitment]\label{def:exampleover}\rm
The following shows the QAM's implementation of the bit-commitment protocol and one-step transition.

\vspace*{-0.4em}
{\small
\begin{center}
$\pard{\bscell{\pard{\seq{\downa{c}}{\seq{\decode{c}{x}}{0}}}{\circ}}}
{\bscell{\pard{\seq{\downa{c}}{\seq{\creva{c}{y}}{0}}}{\circ}}}
\quad\xrightarrow{c}\quad
\pard{\bscell{\pard{{\seq{\decode{c}{x}}{0}}}{c.\circ}}}
{\bscell{\pard{{\seq{\creva{c}{y}}{0}}}{c.\circ}}}
$
\end{center}
}
\end{example}
\vspace*{-0.4em}

On the left configuration, Alice and Bob hold the left and right membranes \footnote{Membranes are from the CHAM, a group of molecules including processes and resources in $\scell{-}$.}, respectively.
They collaboratively create a quantum channel, named $c$, through the quantum channel creation operation ($\downa{c}$).
During the process, a pair of blank qubit resources $\circ$ are consumed and transferred to the ones marked with the channel name $c$ as $c.\circ$ on the right configuration,
meaning that a channel $c$, with blank content, is established between Alice and Bob.
After that, Alice decodes the channel ($\decode{c}{x}$) and receives classical bits as her residue ($x$),
while Bob waits ($\creva{c}{y}$) for Alice's decoding and receives a quantum residue message.
Note that the QAM abstracts away qubit sizes, e.g., the channel creations in \Cref{def:exampleover} do not necessarily refer to a Bell pair involving only two qubits; they might refer to the multi-qubit channel in \Cref{fig:multi-channel} or a variation of multi-qubit Bell pair admitting circuit optimizations.
This paper refers to the qubit resources as quantum resources, abstracting away qubit sizes.

\begin{figure}[t]
\vspace*{-1.8em}
{\small
  \[\hspace*{-0.3em}
  \begin{array}{l}
\textcolor{blue}{\text{Syntactic Categories:}}\\[0.5em]
\begin{array}{l@{\quad}r@{\;\;}c@{\;\;}l@{\qquad}l@{\quad}r@{\;\;}c@{\;\;}l}
      \text{Variable}& x,y & \in & \mathbb{V} &
      \text{Message Names} & e & \in & \mathbb{E}\\
    \text{Quantum Channel} &\; c,d &\in& \Ls&
    \text{Classical Channel} & a &\in& \mathbb{A}\\ 
    \text{Projective Channel} & \cmsg{c},\cmsg{d} &\in&\cmsg{\Ls}
\end{array}\\[1.5em]
\textcolor{blue}{\text{Message Definitions:}}\\[0.5em]
    \begin{array}{l@{\;\;} r@{\;\;} c@{\;\;} l@{\quad}l@{\;\;} r@{\;\;} c@{\;\;} l} 
\text{Quantum Message} & q &::=& \circ \mid e \mid c.\mu \mid \cmsg{c}.q \mid \up{q}&
\text{Classical Message} & \iota &::=& \cmsg{q} \mid a.\iota \mid \cmsg{c}.\iota\mid \up{\iota}
    \end{array}\\[0.5em]
\textcolor{blue}{\text{Syntactic Sugars:}}\\[0.5em]
    \begin{array}{l@{\quad} r@{\;\;} c@{\;\;} l@{\qquad}l@{\quad} r@{\;\;} c@{\;\;} l} 
\text{Quantum Related Channnels} & \alpha & ::= & c \mid \cmsg{c}&
    \text{All Channels} & \delta & ::= & \alpha\mid a\\
\text{All Messages} & \mu &::=& q \mid \iota&
    \text{Channels \& Messages} & \kappa & ::= & \mu \mid \delta
    \end{array}
    \end{array}
  \]
}
\vspace*{-1.2em}
\caption{QAM Data. $\mathbb{V}$, $\mathbb{E}$, $\Ls$ and $\mathbb{A}$ are disjoint. }
  \label{fig:q-pi-definition}
  \vspace*{-1.5em}
\end{figure}

The QAM's data design is in \Cref{fig:q-pi-definition} based on the hybrid aspect of HCQNs, which connects two worlds: quantum and classical. 
HCQNs hybridize quantum and classical communications based on quantum ($q$) and classical ($\iota$) messages, which are distinctly classified in the QAM (\Cref{fig:q-pi-definition}).
This two-world view is also extended to channels acting as intermediate stations for communications, i.e., we distinguish quantum ($c\in\Ls$) and classical $\Pi$-calculus \cite{MILNER19921} style ($a\in \mathbb{A}$) channels. Moreover, it is necessary to include \emph{projective channels} ($\cmsg{c}\in \cmsg{\Ls}$) as the third kind to transform information between the quantum and classical worlds, explained shortly in \Cref{sec:design-project}.
The QAM includes a singleton quantum message abstraction $e$ and allows $\Pi$-calculus style channeled message $\delta.\kappa$, i.e., the channel $\delta$ contains a content $\kappa$, such as the $c.\circ$ above. A channeled message is either quantum or classical depending on if it contains any quantum pieces as the classification in \Cref{fig:q-pi-definition}.
Getting information from a quantum channel $c$ is a projection from the quantum to the classical world, e.g., Alice's decoding above.
The decoding results two different residue messages: classical $\cmsg{q}$ and quantum residues $\up{q}$.
They are associated with $c$'s projective channel $\cmsg{c}$, as $\cmsg{c}.\cmsg{q}$ and $\cmsg{c}.\up{q}$.
We'll unveil the different QAM data step by step below.

\vspace*{-0.7em}
\subsection{Quantum Resources/Channels are Different from Classical Ones}\label{sec:design-resource}
\vspace*{-0.2em}

As we mentioned in \Cref{sec:introduction}, quantum resources cannot be relocated; remote controls of distinct quantum resources are far from general and limited to a few patterns, e.g.,
The rate of building a quantum channel connecting three remote locations is low \cite{de_Jong_2023}.
A Bell pair or a variation of Bell pair \footnote{A multi-qubit Bell Pair or GHZ that communicates two parties.} that communicates two parties might be the only workable quantum channel to connect two parties. Since every quantum channel in the QAM is between two parties, we refer to one portion of a quantum channel $c$ to be a \emph{party} of $c$, e.g., Alice and Bob's membranes in \Cref{def:exampleover} both hold a party of $c$ after the channel creation.
Three unique features in quantum channels are different from classical channels.

First, the non-relocation property above indicates that we should enforce locality in the QAM, captured by our membrane concept inspired by CHAM.
In \Cref{def:exampleover}, Alice and Bob are in different membranes. The quantum resources ($\circ$ and $c.\circ$) cannot move out of membrane,
even if the multiset structure in each membrane indicates commutativity and associativity.
In addition, Bob's membrane (the right one) cannot contain operations that directly manipulate Alice's quantum resources.

Second, processes and quantum resources in a membrane are separated into different entities to liberate the process operations while guaranteeing the quantum no-cloning theorem.
Many basic programming operations, such as substitutions, are too easily misused, so a no-cloning violation could easily happen if the process-resource separation is not provided.
For example, classical substitutions in message-passing models can easily create multiple copies of a variable in a process.
If such a variable represents a quantum resource, it would violate the no-cloning theorem.
Previous quantum process algebras \cite{10.1145/1040305.1040318,10.1145/977091.977108,10.1145/1507244.1507249} employ additional type and dynamic checks for ensuring no violations.

Inspired by Linda \cite{1663305}, the QAM separates processes and quantum resources and utilizes quantum channels referring to the resources in processes.
In the one-step transition in \Cref{def:exampleover}, after applying the quantum channel creation, we create a channel $c$ connecting the two membranes, e.g., $c.\circ$ in both membranes.
The processes in the two membranes could have other operations arbitrarily copying the quantum channel name $c$. Still, creating multiple copies of the channel name $c$
does not create copies of the quantum resources separated from the processes; thus, we forbid the possibility of cloning the quantum resource labeled as $c$.
In addition, the process resource separation mechanism helps the guarantee of the non-relocation property, e.g.,
Alice (\Cref{def:exampleover}) could missend her quantum channel name $c$ in the process to a third party (Mike) through a classical message-sending operation;
however, Mike's membrane does not contain a quantum resource $c$, located in Alice and Bob's membrane; the classical message exchange about $c$ does not permit Mike's reference to the quantum resource in Alice's membrane.

Third, quantum channels have a lifetime, as they must be created and disappear after being used to communicate a message.
In the QAM, we model the channel creation through a pair of actions $\downa{c}$ in two different membranes, as the \rulelab{Cohere} rule below.

\vspace*{-1em}
{\small
\begin{mathpar}
   \inferrule[Cohere]{}
       { \pard{\scell{\pard{\pard{\seq{\downa{c}}{R}}{\circ}}{\overline{M}}}}{\scell{\pard{\pard{\seq{\downa{c}}{T}}{\circ}}{\overline{N}}}}
        \xrightarrow{c} \pard{\scell{\pard{\pard{{R}}{c.\circ}}{\overline{M}}}}{\scell{\pard{\pard{T}{c.\circ}}{\overline{N}}}}}
  \end{mathpar}
}
\vspace*{-1em}

Both membranes have a process starting with an action ($\downa{c}$), as they intend to create a quantum channel $c$. 
In the creation, each of the two membranes must contain a blank quantum resource $\circ$ for consumption because of the no-cloning theorem, as the reason given in \Cref{sec:background}.
The transition raises a label $c$ and labels the two blank quantum resources with the quantum channel name $c$, indicating that a quantum channel is created.
The created quantum channel can only be used to teleport one message.
A process might indirectly manipulate a quantum resource in the other membrane through a procedure named quantum entanglement swaps, explained in \Cref{sec:enswaps},
where a process teleports the information of a quantum resource to another membrane to create the illusion of prolonging a quantum channel to permit long-distance communications.

\ignore{
\begin{example}[Quantum Teleportation Channel Creation Steps]\label{def:example2}\rm

{\small
\[
\begin{array}{ll}
&
\pard{\bscell{\pard{\seq{\downa{c}}{\seq{ \encode{c}{\cmsg{d}}}{\seq{ \decode{c}{x}}{\seq{{\csenda{a}{x}}}{0}}}}}{\pard{\cmsg{d}.e}{\circ}}}}
{{\bscell{\pard{\seq{\downa{c}}{\seq{ \creva{c}{u}}{\seq{\creva{a}{z}}{\seq{\encode{u}{z}}{0} }}}}{\circ}}}}\\[0.3em]
\equiv
&
\pard{\bscell{\pard{\seq{\downa{c}}{\seq{ \encode{c}{\cmsg{d}}}{\seq{ \decode{c}{x}}{\seq{{\csenda{a}{x}}}{0}}}}}{\pard{\circ}{\cmsg{d}.e}}}}
{{\bscell{\pard{\seq{\downa{c}}{\seq{ \creva{c}{u}}{\seq{\creva{a}{z}}{\seq{\encode{u}{z}}{0} }}}}{\circ}}}}\\[0.3em]
\xrightarrow{c}
&
\pard{\bscell{\pard{\pard{\seq{ \encode{c}{\cmsg{d}}}{\seq{ \decode{c}{x}}{\seq{{\csenda{a}{x}}}{0}}}}{c.\circ}}{{\cmsg{d}.e}}}}
{\bscell{{\pard{\seq{ \creva{c}{u}}{\seq{\creva{a}{z}}{\seq{\encode{u}{z}}{0} }}}{c.\circ}}}}
\end{array}
\]
}

\end{example}

The above example is the first two transitions for quantum teleportation (\Cref{def:example1}).
Here, we first equationally rewrite the elements in the first membrane (Alice) so that the blank quantum resource $\circ$ is adjacent to Alice's process.
The rewrite is validated by the commutativity equational property of QAM membranes.
Then, we apply rule \rulelab{Cohere} to create a fresh channel $c$ that permits the communication of the two parties.
}

\vspace*{-0.7em}
\subsection{Projective Nondeterminism and Projective Channels}\label{sec:design-project}
\vspace*{-0.3em}

The decoding operation result of bit-commitment (\Cref{def:exampleover}) nondeterministically produces a pair of classical and quantum residues, selected from the two pairs: $0$ and $\ket{0}$ as well as $1$ and $\ket{1}$. The non-determinism is not truly random, in the sense that one can recover Alice's qubit $\ket{0}$ by merging any of the two pairs.
The same happens in quantum teleportation in \Cref{def:example1}, i.e., any nondeterministically generated pair after the decoding can be merged to recover Alice's original quantum message. An analogy of the phenomenon is that a decoding operation nondeterministically creates a pair of keys and locks, so any pair is matched, but crossed pair keys and locks fail to match.
We name this kind of nondeterminism to be \emph{projective non-determinism} because the nondeterminism is caused by the fact that we project the information in the quantum world to the classical world through the decoding operation.
To model the phenomenon, the QAM includes the concept of projective channels ($\cmsg{c}$ in \Cref{fig:q-pi-definition}). After decoding a quantum channel, we label the two residues in a newly generated projective channel, indicating that they are strongly correlated and can be merged to recover the original information in the quantum channel. Details are in \Cref{sec:overviewprotocol}.

\ignore{
The biggest feature of the QAM is to abstract away the low-level circuit details of HCQN protocols, so one can define these protocols in a high-level abstraction through the view of task blocks as quantum operations.
The abstraction has two requirements: 1) it needs to respect quantum computation theories and physical realities of HCQN communications, and 2) the configurations should be intuitive enough for users to understand.
We develop the QAM by migrating the concepts from previous process algebra (CHAM and Linda), with modifications,
to reflect and respect HCQN communications.

In classical process algebra, such as $\Pi$-calculus \cite{MILNER19921},
the processes and messages are strongly tied together, and process semantics are defined through variable/message substitutions.
After such a substitution, there might exist multiple copies of a message,
which would unavoidably cause the algebra to violate the no-cloning theorem if we simply migrated the variable/message substitution concept to develop a quantum process algebra.
We develop the QAM by including ideas from Linda to separate processes and quantum information resources and parallelize them as molecules in a membrane.
Only when a process and a quantum resource are adjacent, inspired by CHAM, can we allow the process to manipulate the resource. For example, in the bit-commitment configuration above, to create the channel $c$, we require the blank quantum resource $\circ$ to be adjacent to the process containing the channel creation operation in each membrane, shown in the left configuration above. Then, the two $\downa{c}$ operations in the processes can manipulate the two blank resources to create a quantum channel.

In addition, we also realize that HCQN communications typically have distance restrictions, such that a process can react to talk to a quantum resource if they are close to each other. In the QAM, we restrict the above reaction only if molecules (processes and resources) are from the same membrane. Cross-membrane communications can only happen through the bridges of channels.

Note that the QAM abstracts away qubit sizes, so the qubit resources in the QAM are parameterized by a qubit bandwidth when we compile the QAM to an HCQN circuit.
For example, the above quantum channel creation in the bit-commitment configuration does not necessarily refer to the Bell pair creation involving only two qubits.
If the pre-defined qubit size is sufficiently large, the creation might refer to the multi-qubit channel in \Cref{fig:multi-channel}.
This paper refers to the qubit resources as quantum resources, meaning they are not restricted to singleton qubits.
}

\vspace*{-0.8em}
\section{The QAM Semantics}
\label{sec:sem}
\vspace*{-0.3em}

In this section, we define the QAM semantics based on a succinct set of abstract operations to capture the behaviors of HCQN protocols.
For highlighting the operations, we utilize quantum teleportation as an example,
which is the standard HCQN protocol, serving as the basis of many other HCQN protocols \cite{10.1145/3386367.3431293,10.1145/3387514.3405853}.

\vspace*{-0.7em}
\subsection{The QAM Operation Design for Quantum Communications}\label{sec:qamsyntax}
\vspace*{-0.2em}

\begin{figure}[t]
\vspace*{-2.3em}
{\small
  \[\hspace*{-1em}
    \begin{array}{l@{\qquad} l@{\;\;} c@{\;\;\;} l} 
    \text{Resources} & \phi & ::= & \alpha.\mu \mid \circ\\
      \text{Action} & A & ::= & \downa{c} \mid \csenda{a}{\iota}\mid\creva{\delta}{x}
\mid \encode{\alpha}{\kappa} \mid \decode{c}{x}\\[0.2em]

      \text{Process} & R,T & ::= & 0 \mid \seq{A}{R}\mid \parp{R}{T}\mid \comp{R} \\[0.2em]

      \text{Molecule} & M,N & ::= & R \mid \phi\\[0.2em]

      \text{Membrane} & P,Q &::=& \scell{\overline{M}}
                         \mid \parll{P}{\phi}{Q}
    \end{array}
  \]
}
\vspace*{-1.5em}
\caption{QAM syntax. $\overline{S}$ to denote the set of multisets over $S$. }
  \label{fig:q-pi-syntax}
  \vspace*{-1.5em}
\end{figure}

The QAM syntax is given in \Cref{fig:q-pi-syntax}.
The CHAM's membrane concept \cite{BERRY1992217} is central in the QAM's operation design, where processes and quantum resources coexist as molecules in a membrane, inspired by Linda \cite{1663305}. In a membrane, we utilize the multiset adjacency mechanism from CHAM to marry a process and a quantum resource, permitting the manipulation of the resource. On the other hand, we only allow communications through channels between distinct membranes.
We have two kinds of membranes:
a molecule membrane ($\scell{\overline{M}}$) indicates that all molecules $\overline{M}$, either processes or resources, can marry with each other;
and an airlocked membrane $\parll{P}{\phi}{Q}$, i.e., the membranes $P$ and $Q$ are actively communicating through the shared resource $\phi$, typically a quantum channel.
Every process living inside membranes is one of the following actions: a unit $0$, an action followed by a process $\seq{A}{R}$, a choice operation $\parp{R}{T}$, or a process replication $\comp{R}$.

We model a concise set of operations --- mainly inspired by quantum task blocks repeatedly appearing in quantum network algorithms --- as actions,
including similar operations from $\Pi$-calculus such as classical message sender ($\csenda{a}{\mu}$) and receiver ($\creva{\delta}{x}$), as well as new QAM operations for modeling HCQN protocols, such as quantum channel creators ($\downa{c}$), quantum message encoders ($\encode{c}{\kappa}$) and decoders ($\decode{c}{x}$).

\vspace*{-0.5em}
\begin{example}[Quantum Teleportation]\label{def:example1}\rm
The QAM configuration for quantum teleportation.

\vspace*{-0.6em}
{\small
\begin{center}
$
\begin{array}{ll}
&
\pard{\bscell{\pard{\seq{\downa{c}}{\seq{ \encode{c}{\cmsg{d}}}{\seq{ \decode{c}{x}}{\seq{{\csenda{a}{x}}}{0}}}}}{\pard{\cmsg{d}.e}{\circ}}}}
{{\bscell{\pard{\seq{\downa{c}}{\seq{ \creva{c}{u}}{\seq{\creva{a}{z}}{\seq{\encode{u}{z}}{0} }}}}{\circ}}}}\\
\xrightarrow{c}
&
\pard{\bscell{\pard{\pard{\seq{ \encode{c}{\cmsg{d}}}{\seq{ \decode{c}{x}}{\seq{{\csenda{a}{x}}}{0}}}}{c.\circ}}{\cmsg{d}.e}}}
{\bscell{\pard{\seq{ \creva{c}{u}}{\seq{\creva{a}{z}}{\seq{\encode{u}{z}}{0} }}}{c.\circ}}}
\end{array}
$
\end{center}
}
\vspace*{-0.8em}
\end{example}

The configuration is similar to \Cref{def:example1}. Alice has an additional encoding that encodes a quantum message resource $\cmsg{d}.e$, in her membrane, waiting to send to Bob.
Alice also sends the classical residue after her decoding to Bob via a classical channel $a$.

\vspace*{-0.9em}
\subsubsection{Equations for Membranes}
\label{sec:equations}
\vspace*{-0.3em}

The QAM's multiset structures indicate implicit equational properties, i.e., identity, commutativity, and associativity equational properties.
Another important equational property is the airlock property (\rulelab{AirEQ}), an imitation of the airlock in CHAM \footnote{CHAM's airlock happens inside a membrane, while ours happens across membranes.}, where we airlock two distinct membranes sharing a quantum channel for activating their communication.

{
{\small
\begin{mathpar}
   \inferrule[AirEQ]{}
       {\pard{\scell{\pard{{c.\mu_1}}{\overline{M}}}}{\scell{\pard{{c.\mu_2}}{\overline{N}}}}
             \equiv 
             \parll{\scell{{\overline{M}}}}{c.(\mu_1\odot\mu_2)}{\scell{{\overline{N}}}}}
\end{mathpar}
}
\vspace*{-1.5em}
}

The equation \rulelab{AirEQ} describes the above communication activation. On the right side, the two membranes hold the same channel $c$ with different content.
After the equational rewrite, the two membranes actively communicate through the channel $c$ with the combination of the two content $\mu_1 \odot \mu_2$.
The $\odot$ operation describes an algebra for calculating the composition of two quantum messages (\Cref{sec:qamsem}),
which is extendable, inspired by the extendable features in the CHAM, i.e.,
We can extend the operation to include circuit gates for reasoning about complicated quantum configurations written in circuit gates as a future work of the QAM, e.g., the one in \Cref{sec:qamsecurity}.
Equations are bidirectional so that an airlock mode can be dissolved to the original membrane as the left-hand side of the equation \rulelab{AirEQ}.

\vspace*{-0.8em}
\subsubsection{Encoding Quantum Messages into Channels}
\label{sec:quantummessage}
\vspace*{-0.3em}

Message encoding is the procedure of installing a message, either quantum or classical, to a quantum channel, which is one of the essential steps in HCQN communications,
e.g., in \Cref{def:example1}, Alice encodes a quantum message into a channel between Alice and Bob.
Different encoding schemes are modeled as rules below.

\vspace*{-1.2em}
{\small
\begin{mathpar}
   \inferrule[QLocal]{}
       {\scell{\pard{\pard{\seq{\encode{c}{\alpha}{R}}}{\alpha.q}}{...}}
             \longrightarrow \scell{\pard{\seq{\encode{c}{\alpha.q}}{R}}{...}}}

   \inferrule[CLocal]{}
       {\scell{\pard{\pard{\seq{\creva{\cmsg{c}}{x}}{R}}{\cmsg{c}.\iota}}{{...}}}
             \longrightarrow {\scell{\pard{{R[\iota/x]}}{...}}} }
  \end{mathpar}
  \begin{mathpar}
   \inferrule[Encode]{}
       {\scell{\pard{\pard{\seq{\encode{\alpha}{\mu_1}}{R}}{\alpha.\mu_2}}{...}}
             \longrightarrow {\scell{\pard{\pard{R}{\alpha.(\mu_1\odot \mu_2)}}{...}} }}
  \end{mathpar}
}
\vspace*{-1.2em}

In rule \rulelab{Encode}, process $R$ is adjacent to a party of a quantum channel $c$ with the content $\mu_2$. $R$ applies an operation to encode the message $\mu_1$ to $c$, turning $c$'s content to be $\mu_1\odot \mu_2$ by summing $\mu_1$ and $\mu_2$ through the meet operation $\odot$; explained shortly below.

The other two rules help the localization of quantum resources.
In QAM, a quantum resource typically does not appear in a process due to the separation of resource and process. 
However, we need to localize a quantum resource in rule \rulelab{Encode}.
The localization is handled by rules \rulelab{QLocal} and \rulelab{CLocal}.
The former localizes the quantum resource, while the latter handles the one with a classical message $\iota$ having a projective channel $\cmsg{c}$.
When a process encodes a quantum message $\alpha.q$, the encoding operation (${c}\triangleleft{\alpha}$) in the process first refers the channel $\alpha$ to the resource location where the message resides. Then, the process localizes the message by calling the referenced name $\alpha$, as in rule \rulelab{QLocal},
where we replace the channel $\alpha$ in $R$ with the message $\alpha.q$.
Here, we only replace a single channel $\alpha$ with the single quantum message without violating the no-cloning theorem -- a quantum message cannot be cloned.
Rule \rulelab{CLocal} does a similar task as \rulelab{QLocal}, but we can replace every occurrence of $x$ with $\iota$ in $R$ because $\iota$ is classical without the no-cloning limitation.
We will explain why a resource molecule can be a classical message with the form $\cmsg{d}.q$ in \Cref{sec:overviewprotocol}.
We proved the property below that the QAM semantics does not violate the no-cloning theorem.

\vspace*{-0.5em}
\begin{theorem}[No-cloning Assurance]\label{def:no-cloning-good}\rm

For any membrane $\scell{\pard{\overline{R}}{\overline{\phi}}}$ in a QAM configuration $\overline{P}$, for any quantum message $q$ as a resource molecule in $\overline{\phi}$, any transition $\overline{P} \to \overline{Q}$ does not substitute and copy $q$ more than once.
\end{theorem}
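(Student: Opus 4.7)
The plan is to prove the theorem by a structural case analysis on the one-step transition $\overline{P} \to \overline{Q}$, examining each reduction rule and each membrane-equation used in the congruence closure, and checking that none of them duplicates a quantum-message resource. A useful preliminary lemma is to show that the equational rules (identity, commutativity, associativity of the multisets, and \rulelab{AirEQ}) are conservative with respect to molecule multiplicities: they merely rearrange molecules or rename them via airlocking, and in particular \rulelab{AirEQ} moves $c.\mu_1$ and $c.\mu_2$ into a single airlocked resource $c.(\mu_1\odot\mu_2)$ without replicating either quantum payload. Once equational steps are shown not to duplicate resources, it suffices to inspect the genuine reduction rules.

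The main work is the rule-by-rule inspection. For \rulelab{Cohere}, the rule consumes two blank resources $\circ$ and rewrites them into channelled resources $c.\circ$; no quantum message is touched, so the property holds vacuously. For \rulelab{Encode}, a resource $\alpha.\mu_2$ is replaced by a single resource $\alpha.(\mu_1\odot\mu_2)$; the quantum payload $\mu_2$ appears exactly once on both sides, and $\mu_1$ is consumed from the encoding action in the process, so no copy is introduced. The delicate cases are \rulelab{QLocal} and \rulelab{CLocal}, because these involve substitution. For \rulelab{QLocal} the statement is precisely that the channel name $\alpha$ in the process is replaced by the single resource $\alpha.q$ once, removing that resource from the molecule pool; I will need a short syntactic argument that the encoding prefix $\encode{c}{\alpha}$ references $\alpha$ linearly, so the substitution of $\alpha.q$ into $R$ occurs at the unique marked occurrence inside the encoding operation rather than at all free occurrences of $\alpha$ in $R$. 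For \rulelab{CLocal}, the substituted object is a classical message $\iota$; although $\iota$ may carry projective-channel labels $\cmsg{c}$, it never contains a bare quantum message $q$ (by the grammar in Figure~\ref{fig:q-pi-definition}, classical messages only embed $\cmsg{q}$, which by construction is a classical projection), so even though $x$ may have multiple occurrences in $R$, replicating $\iota$ replicates only classical data and projective-channel names, not quantum resources.

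Having handled the base reduction rules, I would close the induction by treating the structural/contextual rules: parallel composition of membranes, choice $\parp{R}{T}$, and replication $\comp{R}$. The critical observation for replication is that $\comp{R}$ unfolds the process $R$ but does not touch resources; since no quantum resource syntactically occurs inside a process term (by the process grammar of Figure~\ref{fig:q-pi-syntax}, processes contain only action prefixes and channel names, never molecules of the form $\alpha.q$), unfolding cannot copy any $q$. Likewise, parallel composition and choice only compose or select subprocesses without altering the resource multiset.

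The hard part, I expect, will be the \rulelab{QLocal} case: making precise the claim that the encoding action picks up the \emph{unique} resource $\alpha.q$ rather than substituting $\alpha.q$ for every free occurrence of $\alpha$ in $R$. This will require either (i) a linearity discipline on quantum-channel names inside the action $\encode{c}{\alpha}$, or (ii) an explicit reading of the rule as a two-step protocol---remove the resource $\alpha.q$ from the membrane and hand it to the single encoding action that referenced $\alpha$---so that the substitution is point-like rather than global. Once this is settled, combining the cases yields the theorem: every transition touches each quantum resource $q$ at most once.
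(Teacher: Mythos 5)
Your proposal takes essentially the same route as the paper's proof: a rule-by-rule case analysis of the transition relation, checking that each rule involving substitution or duplication (\rulelab{QLocal}, \rulelab{CLocal}, \rulelab{Encode}, replication) touches each quantum resource at most once, with classical substitutions exempt because they never carry a bare quantum payload. You are in fact somewhat more careful than the paper at the \rulelab{QLocal} step, where the paper simply asserts the substitution is non-duplicating while you correctly flag that this rests on the single marked occurrence of $\alpha$ in the encoding prefix; the argument is correct and matches the paper's.
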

\vspace*{-0.5em}

A proof outline for the above theorem is given in \Cref{sec:qamsemproof1a}.

\begin{example}[The Quantum Teleportation Encoding Steps]\label{def:example3}\rm

{\small
\[\vspace*{-0.7em}
\begin{array}{lll}
&
\pard{\bscell{\pard{\pard{\seq{ \encode{c}{\cmsg{d}}}{\seq{ \decode{c}{x}}{\seq{{\csenda{a}{x}}}{0}}}}{c.\circ}}{\cmsg{d}.e}}}
{\bscell{\pard{\seq{ \creva{c}{u}}{\seq{\creva{a}{z}}{\seq{\encode{u}{z}}{0} }}}{c.\circ}}}\\[0.2em]
\equiv
&
\pard{\bscell{\pard{\pard{\seq{ \encode{c}{\cmsg{d}}}{\seq{ \decode{c}{x}}{\seq{{\csenda{a}{x}}}{0}}}}{\cmsg{d}.e}}{c.\circ}}}
{\bscell{\pard{\seq{ \creva{c}{u}}{\seq{\creva{a}{z}}{\seq{\encode{u}{z}}{0} }}}{c.\circ}}}\\[0.2em]
\longrightarrow
&
\pard{\bscell{\pard{\seq{ \encode{c}{\cmsg{d}.e}}{\seq{ \decode{c}{x}}{\seq{{\csenda{a}{x}}}{0}}}}{c.\circ}}}
{\bscell{\pard{\seq{ \creva{c}{u}}{\seq{\creva{a}{z}}{\seq{\encode{u}{z}}{0} }}}{c.\circ}}}
&
(\rulelab{QLocal})
\\[0.2em]
\longrightarrow
&
\pard{\bscell{\pard{\seq{ \decode{c}{x}}{\seq{{\csenda{a}{x}}}{0}}}{c.\cmsg{d}.e}}}
{\bscell{\pard{\seq{ \creva{c}{u}}{\seq{\creva{a}{z}}{\seq{\encode{u}{z}}{0} }}}{c.\circ}}}
&
(\rulelab{Encode})
\end{array}
\]
}
\end{example}

\Cref{def:example3} shows the encoding steps in \Cref{def:example1} after creating a quantum channel.
Here, the localization rule \rulelab{QLocal}, after the equational rewrite, replaces the spot marked as  $\cmsg{d}$ with the message $\cmsg{d}.e$.
Then, we apply rule \rulelab{Encode} to push $\cmsg{d}.e$ to the quantum channel $c$.
Bob's party of the channel $c$ is unaffected in the encoding transition.
This non-affection unveils that Bob might not observe Alice's encoding behavior, coinciding with quantum channels' physical behavior.
A channel's content depends on the meet ($\odot$) of the two pieces of information in the channel's two parties, described in rule \rulelab{AirEQ} above.
The meet operation $\odot$ describes the behavior of quantum message manipulations from two different membranes.
In QAM, we first propose its basic algebraic property as:

\vspace*{-0.5em}
\begin{definition}[Message Encoding Properties]\label{def:theoremassem}\rm
The $\odot$'s algebraic properties:

\vspace*{-0.5em}
{\small
\begin{center}
$
\mu \odot \mu' \equiv \mu' \odot \mu
\quad
\circ \odot \,q \equiv q
\quad
\cmsg{c}.\cmsg{q}\; \odot \cmsg{c}.\up{q} \equiv q
\quad
\iota \odot \iota \equiv \circ
\quad
\cmsg{c}.(\cmsg{c}.\iota \odot \mu) \equiv \cmsg{c}.\iota \odot \cmsg{c}.\mu
$
\end{center}
}
\vspace*{-1em}
\end{definition}

The first two rules define $\odot$'s commutativity and identity.
The third rule introduces the meet of classical and quantum residues.
The next suggests that the effect is canceled if two parties encode the same classical message into a quantum channel.
The last suggests that the meet of nested projective channels can be rearranged if they are the same; see \Cref{sec:qamsem}.

\vspace*{-0.7em}
\subsubsection{Decoding a Quantum Channel}
\label{sec:overviewprotocol}
\vspace*{-0.2em}

Once a party encodes a message in a quantum channel, he can apply the decoding operation to cut off his holding of the channel
so that the quantum information stored in the quantum channel flows to another party.
Regarding the channel's content type (classical or quantum), the decoding semantics are described below by the \rulelab{Decode} rule.

\vspace*{-1em}
{\small
\begin{mathpar}
   \inferrule[Decode]{}
       {\parll{\scell{\pard{\seq{\decode{c}{x}}{R}}{\overline{M}}}}{c.\mu}{\scell{\pard{\seq{\creva{c}{y}}{T}}{\overline{N}}}}
                \xrightarrow{c.\mu} 
      \pard{\scell{\pard{R[\cmsg{c}.\cmsg{\mu}/x]}{\overline{M}}}}{\scell{\pard{\pard{T[\cmsg{c}/y]}{\cmsg{c}.\up{\mu}}}{\overline{N}}}} }

  \end{mathpar}
}
\vspace*{-1em}

Before a decoding transition, the left and right membranes need to be airlocked via a shared quantum channel $c.\mu$.
Process $R$ starts decoding the channel $c$, and it receives a classical residue $\cmsg{c}.\cmsg{\mu}$
with the projective channel $\cmsg{c}$. Process $T$ in the right membrane is waiting ($\creva{c}{y}$) on the decoding result via channel $c$, and it receives a projective channel $\cmsg{c}$.
In addition, the right membrane also contains a new residue resource $\cmsg{c}.\up{\mu}$ with the projective channel $\cmsg{c}$.
If the channel content is quantum, i.e., $\mu$ is a quantum message $q$, $\up{q}$ is a quantum residue;
if $\mu$ is classical ($\iota$), $\up{\iota}$ is a classical residue.
Essentially, a projective channel holding a classical message is a classical entity and should not be considered a quantum resource.
One can utilize rule \rulelab{CLocal} to localize the message to be owned by a process.

\vspace*{-0.2em}
\begin{example}[The Quantum Teleportation Decoding Steps]\label{def:example4}\rm

{\small
\[\vspace*{-0.8em}
\begin{array}{ll}
&
\pard{\bscell{\pard{{\seq{ \decode{c}{x}}{\seq{{\csenda{a}{x}}}{0}}}}{c.\cmsg{d}.e}}}
{\bscell{\pard{\seq{ \creva{c}{u}}{\seq{\creva{a}{z}}{\seq{\encode{u}{z}}{0} }}}{c.\circ}}}
\\
\equiv
&
\parll{{\bscell{{{\seq{ \decode{c}{x}}{\seq{{\csenda{a}{x}}}{0}}}}}}}{c.\cmsg{d}.e}
{{\bscell{{\seq{ \creva{c}{u}}{\seq{\creva{a}{z}}{\seq{\encode{u}{z}}{0} }}}}}}
\\
\xrightarrow{c.\cmsg{d}.e}
&
\pard{{\bscell{{{{\seq{{\csenda{a}{\cmsg{c}.\cmsg{\cmsg{d}.e}}}}{0}}}}}}}
{\bscell{\pard{\seq{\creva{a}{z}}{\seq{\encode{\cmsg{c}}{z}}{0} }}{\cmsg{c}.\up{\cmsg{d}.e}}}}
\end{array}
\]
}

\end{example}
\vspace*{-0.3em}

The above shows the decoding steps after \Cref{def:example3},
where we decode the channel $c$ with quantum content $\cmsg{d}.e$, by first airlocking the two membranes.
Decoding $c$ pushes the left membrane to receive a classical residue $\cmsg{c}.\cmsg{\cmsg{d}.e}$ and the right membrane receives the projective channel $\cmsg{c}$.
The right membrane's resource is rearranged as the quantum residue $\cmsg{c}.\up{\cmsg{d}.e}$.
The label $c.\cmsg{d}.e$ refers to a quantum channel $c$ with a quantum message $e$ labeled with a projective channel $\cmsg{d}$.

The following theorem explains the non-relocation property in \Cref{sec:design-resource}, i.e., message decoding is the sole way of teleporting the information stored in a quantum resource $c$ from one membrane $P$ to the other one $Q$. We show that the QAM respects the property below:

\begin{theorem}[Non-Relocation]\label{def:decoding-good}\rm
For any $P=\scell{\pard{\overline{R}}{\overline{\phi}}}$ in a QAM configuration $\overline{P}$, for a quantum resource labeled with $c$ in $\overline{\phi}$,
if it is transformed to another membrane $Q$ in $\overline{P}$, then

\vspace*{-0.5em}

\begin{itemize}

\item There must be a decoding operation to erase $c$ in $P$ first.
\item After the decoding, there is an encoding in $Q$ to reconstruct $c$ in $Q$'s resource fields.

\end{itemize}
\end{theorem}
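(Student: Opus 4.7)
The plan is to proceed by a case analysis over the one-step transition rules of the QAM (\rulelab{Cohere}, \rulelab{QLocal}, \rulelab{CLocal}, \rulelab{Encode}, \rulelab{Decode}), together with the structural equations (in particular \rulelab{AirEQ}), and then extend to multi-step transitions by induction on the length of the derivation $\overline{P}\to^{*}\overline{Q}$. For each rule I would examine what it does to a resource molecule labeled by a quantum channel name $c$ in a given membrane $P$, and show that the only way for the information stored at $c$ in $P$ to end up in a distinct membrane $Q$ is through the pattern announced in the two bullets.

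First I would dispatch the easy cases. \rulelab{QLocal}, \rulelab{CLocal} and \rulelab{Encode} all have conclusions of the form $\scell{\dots}\longrightarrow\scell{\dots}$, i.e., they operate strictly inside a single membrane and therefore cannot migrate $c$ to another membrane. \rulelab{Cohere} introduces a fresh $c$ simultaneously in two membranes from a pair of blank resources $\circ$; it does not move any pre-existing $c$-labeled resource. The structural equation \rulelab{AirEQ} only rearranges the bracketing of two membranes sharing a channel (an airlock) and is bidirectional; by inspection of its shape, it preserves the identity of each membrane's local resource pool. Hence none of these rules, alone, can be responsible for the transfer announced in the hypothesis.

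The only remaining rule is \rulelab{Decode}. I would argue that it is precisely this rule that both (i) erases the resource $c.\mu$ from the left airlocked membrane (the one containing the $\decode{c}{x}$ action) and (ii) installs a new resource $\cmsg{c}.\up{\mu}$ in the right membrane, thereby realizing the first bullet. For the second bullet, note that after \rulelab{Decode}, the right membrane holds the quantum residue $\cmsg{c}.\up{\mu}$ together with a bound projective channel $\cmsg{c}$ in its process; by \Cref{def:theoremassem} the original content at channel $c$ can only be recovered by re-combining this residue with its classical partner $\cmsg{c}.\cmsg{\mu}$ via the meet operation $\odot$, and the only syntactic way to perform such a combination in $Q$'s resource field is through a subsequent \rulelab{Encode} action acting on a channel bearing the projective tag $\cmsg{c}$ (possibly after transmitting the classical residue into $Q$ by \rulelab{CLocal} preceded by a classical send/receive). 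Thus reconstruction of the information originally at $c$ in $Q$'s resource fields must go through such an \rulelab{Encode}, establishing the second bullet.

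Finally, I would lift the per-step analysis to arbitrary transition sequences $\overline{P}\to^{*}\overline{Q}$ by induction on the length, using \Cref{def:no-cloning-good} to ensure that the resource labeled $c$ is never duplicated along the way (so ``transformed to $Q$'' is well-defined: the $c$-labeled resource leaves $P$ and appears in $Q$, not in both). The main obstacle I anticipate is making precise the informal notion of ``the information stored at $c$ being transformed to another membrane'', as opposed to merely the syntactic token $c$. My intended fix is to track the residual pair $(\cmsg{c}.\cmsg{\mu},\,\cmsg{c}.\up{\mu})$ introduced by \rulelab{Decode} as the unique witness of such information flow and to invoke the fourth and fifth equations of \Cref{def:theoremassem} to justify that any later membrane producing a resource equivalent to $c.\mu$ must have merged precisely this pair via \rulelab{Encode}.
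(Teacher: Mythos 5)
Your proposal follows essentially the same route as the paper's proof: a rule-by-rule case analysis showing that \rulelab{QLocal}, \rulelab{CLocal}, \rulelab{Encode}, and the classical rules act within a single membrane, that \rulelab{Decode} is the unique rule that erases a $c$-labeled resource (the paper isolates this as a ``Decoding Uniqueness'' lemma), and that reconstruction in $Q$ must proceed through an \rulelab{Encode} merging the residue pair $\cmsg{c}.\cmsg{\mu}$ and $\cmsg{c}.\up{\mu}$ via $\odot$ (the paper's ``Quantum Channel Generation'' lemma). Your tracking of the residual pair as the witness of information flow is, if anything, slightly more explicit than the paper's treatment of the second bullet.

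The one ingredient you gloss over is why a $c$-labeled resource cannot reappear in $Q$ through a \emph{new} application of \rulelab{Cohere}. You assert that \rulelab{Cohere} ``introduces a fresh $c$,'' but nothing in the rule itself enforces freshness of the name; the paper closes this hole by introducing an explicit channel well-formedness assumption on configurations (each creation pair uses a name distinct from all channels already present in other pairs and in membrane resources) and then invoking it to rule out the creation case in the final case split. Without stating that hypothesis, your argument for the second bullet is incomplete: a configuration reusing the name $c$ in a later $\downa{c}$ pair would ``reconstruct'' $c$ in $Q$ with no encoding at all. Adding the well-formedness assumption (or an equivalent freshness discipline on channel names) as an explicit premise would make your proof match the paper's.
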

\vspace*{-0.5em}

A proof outline for the above theorem is given in \Cref{sec:qamsemproof1b}.
The key indication in \Cref{def:decoding-good} shows that classical message exchanging of a quantum channel name $c$ does not permit the other party access to the resource $c$.

\vspace*{-0.3em}
\begin{example}[The Quantum Teleportation Reconstruction Steps]\label{def:example4a}\rm

{\small
\[\vspace*{-0.5em}
\begin{array}{ll}
&
\pard{{\bscell{{{{\seq{{\csenda{a}{\cmsg{c}.\cmsg{\cmsg{d}.e}}}}{0}}}}}}}
{\bscell{\pard{\seq{\creva{a}{z}}{\seq{\encode{\cmsg{c}}{z}}{0} }}{\cmsg{c}.\up{\cmsg{d}.e}}}}
\\
\xrightarrow{a.\cmsg{c}.\cmsg{\cmsg{d}.e}}
&
\pard{{\bscell{{{{{0}}}}}}}
{\bscell{\pard{{\seq{\encode{\cmsg{c}}{\cmsg{c}.\cmsg{\cmsg{d}.e}}}{0} }}{\cmsg{c}.\up{\cmsg{d}.e}}}}
\\
\longrightarrow
&
\pard{{\bscell{{{{{0}}}}}}}
{\bscell{\pard{0}{\cmsg{d}.e}}}
\end{array}
\]
}
\end{example}
\vspace*{-0.3em}

The above example shows the final two steps of quantum teleportation. Alice sends the classical residue through classical channel $a$, and Bob encodes the received message to the quantum resource with the projective channel $\cmsg{c}$ to retrieve $\cmsg{d}.e$.

\vspace*{-0.9em}
\subsection{The QAM Semantic Framework and Other Semantic Rules} \label{sec:qamsem}

\begin{figure}[h]
\vspace*{-1.8em}
{\footnotesize
\begin{mathpar}
\hspace*{-0.8em}
   \inferrule[ID1]{}
       {\scell{\emptyset} \equiv \emptyset   }
\qquad
   \inferrule[ID2]{}
       {0 \equiv \emptyset   }
\qquad
   \inferrule[Split]{}
       {\scell{\pard{\overline{M}}{\overline{N}}}\longrightarrow \pard{\scell{\overline{M}}}{\scell{\overline{N}}}  }
\qquad
  \inferrule[CL]{}
      {\parp{R}{T} \longrightarrow R}
\qquad
  \inferrule[CR]{}
      {\parp{R}{T} \longrightarrow T}
\qquad
   \inferrule[MT]{}
       {\comp{R} \longrightarrow \pard{R}{\comp{R}}}
  \end{mathpar}
  \begin{mathpar}
   \inferrule[NT]{}
       {\comp{R} \longrightarrow 0}

   \inferrule[Decohere]{}
       { \phi \longrightarrow \emptyset }

  \inferrule[Com]{}
      { \pard{\scell{\pard{\seq{\csenda{a}{\iota}}{R}}{...}}}{\scell{\pard{\seq{\creva{a}{x}}{T}}{...}}}
           \xrightarrow{a.\iota} \pard{\scell{\pard{R}{...}}}{\scell{\pard{T[\iota/x]}{...}}}}
  \end{mathpar}
}
\vspace*{-1.8em}
\caption{Additional QAM semantics.}
  \label{fig:q-pi-semantics1}
  \vspace*{-1em}
\end{figure}

\Cref{fig:q-pi-semantics1} provides additional equations and semantic rules.
Rules \rulelab{ID1} and \rulelab{ID2} define the identity relations of membranes and processes.
Rule \rulelab{Split} separates a membrane into two.
In QAM, a membrane can be split into two, but two membranes cannot join into one.
If a membrane wants to create a quantum channel with itself, it can first apply the \rulelab{Split} rule and then apply the channel creation rule.

Rules \rulelab{CL} and \rulelab{CR} define the semantics for choice operations,
while rules \rulelab{MT} and \rulelab{NT} define the semantics for replications, both similar to the ones in $\Pi$-calculus.
The behavior of replications is similar to a local quantum machine that can repeatedly generate the same quantum state,
which does not violate the no-cloning theorem because the generation is local and only produces the same known quantum state.
Rule \rulelab{Decohere} defines the phenomenon that quantum resources decohere which can cause a quantum resource disappear nondeterminically.
We also support $\Pi$-calculus style classical communication as the rule \rulelab{Com}.
In rule \rulelab{Decode} (\Cref{sec:overviewprotocol}) and \rulelab{Com}, the receiver operation $\creva{a}{x}$ acts as synchronous waiting.
In \rulelab{Decode}, it waits for the other party to decode a quantum channel while in \rulelab{Com}, it waits to receive a classical message ;
examples are in \Cref{def:example4,def:example4a}.

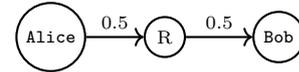
\begin{wrapfigure}{r}{3.8cm}
\vspace*{-3em}
{\scriptsize
    \begin{tikzpicture}[node distance={1.5cm}, thick, main/.style = {draw, circle}] 
    \node[main] (1) {\cn{Alice}}; 
    \node[main] (2) [right of=1] {R};
    \node[main] (3) [right of=2] {\cn{Bob}};
    \draw[->] (1) -- node[midway, above] {0.5} (2); 
    \draw[->] (2) -- node[midway, above] {0.5} (3);  
    \end{tikzpicture}
}
    \caption{Alice to Bob via an Intermediate Router}
    \label{fig:ctdan}
\end{wrapfigure} 

\vspace*{-0.9em}
\subsection{Case Study: Quantum Entanglement Swaps}
\label{sec:enswaps}
\vspace*{-0.3em}

As mentioned in \Cref{sec:introduction-ex}, communication through a two-party quantum channel might be the only available form of long-distance quantum communication.
Essentially, long-distance quantum message transmissions are built through the compositions of multiple quantum teleportations; this is known as quantum entanglement swaps (QES in \Cref{sec:background}).
QES treats a party of a quantum channel $c$ as a piece of quantum information that can be encoded into another channel $d$ and conveyed through $d$.
We show an example QES configuration below for modeling \Cref{fig:background-circuit-examplec}, with the left membrane being Alice, the middle one being an intermediate router, and the right one being Bob. Alice tries to send a message $\cmsg{c}.e$ to Bob.

\begin{example}[Quantum Entanglement Swap]\label{def:example6}\rm
The QES configuration and its first few steps of transitions.
Here, $A={{\seq{ \decode{c}{x}}{\seq{{\csenda{a}{x}}}{0}}}}$, $R=\decode{d}{y}\seq{\csenda{a_1}{y} }{0}$ and $B=\seq{\creva{c}{u}\creva{a}{t}\encode{u}{t}}{0}$.

{\small
$
\pard{\pard{\bscell{\pard{\seq{\downa{c}}{\seq{ \encode{c}{\cmsg{c_1}}}{A}}}{\pard{\cmsg{c_1}.e}{\circ}}}}
{{\bscell{\pard{\seq{\downa{c}\downa{d}\encode{d}{c}}{R}}{\pard{\circ}{\circ}}}}}}
{
{{\bscell{\pard{\seq{\downa{d}}{\seq{ \creva{d}{w}}{\seq{\creva{a_1}{z}}{\seq{\encode{w}{z}}{B }}}}}{\circ}}}}}
$

{\noindent
\[
\begin{array}{lll}
...\xrightarrow{c}...\xrightarrow{d}
&
\pard{\pard{\bscell{\pard{A}{c.\cmsg{c_1}.e}}}
{\bscell{\pard{\pard{\seq{\encode{d}{c}}{R}}{d.\circ}}{{c.\circ}}}}}
{{\bscell{\pard{\seq{ \creva{d}{w}}{\seq{\creva{a_1}{z}}{\seq{\encode{w}{z}}B }}}{d.\circ}}}}
\\[0.2em]
\equiv
&
\pard{\pard{\bscell{\pard{A}{c.\cmsg{c_1}.e}}}
{\bscell{\pard{\pard{\seq{\encode{d}{c}}{R}}{c.\circ}}{{d.\circ}}}}}
{{\bscell{\pard{\seq{ \creva{d}{w}}{\seq{\creva{a_1}{z}}{\seq{\encode{w}{z}}B }}}{d.\circ}}}}
\\[0.2em]
\longrightarrow
&
\pard{
\pard{\bscell{\pard{A}{c.\cmsg{c_1}.e}}}
{{\bscell{\pard{{\seq{\encode{d}{c.\circ}}{R}}}{{{d.\circ}}}}}}}
{
{{\bscell{\pard{\seq{ \creva{d}{w}}{\seq{\creva{a_1}{z}}{\seq{\encode{w}{z}}{B} }}}{d.\circ}}}}}
&{(\rulelab{QLocal})}
\\[0.2em]
\equiv
&
\pard
{\pard{\bscell{\pard{A}{c.\cmsg{c_1}.e}}}
{{\bscell{\pard{\seq{\encode{d}{c.\circ}}{R}}{d.\circ}}}}}
{
{{\bscell{\pard{\seq{ \creva{d}{w}}{\seq{\creva{a_1}{z}}{\seq{\encode{w}{z}}{B} }}}{d.\circ}}}}}
\\[0.2em]
\longrightarrow
&
\pard
{\pard{\bscell{\pard{A}{c.\cmsg{c_1}.e}}}
{{\bscell{\pard{R}{d.c.\circ}}}}}
{
{{\bscell{\pard{\seq{ \creva{d}{w}}{\seq{\creva{a_1}{z}}{\seq{\encode{w}{z}}{B} }}}{d.\circ}}}}}
&{(\rulelab{Encode})}
\end{array}
\]
}
}
\end{example}

The example above shows several transition steps. The first step is an abbreviation of combining several steps to create two quantum channels, $c$ and $d$.
Rule \rulelab{QLocal} replaces the channel name $c$ in the encoding operation $\encode{d}{c}$, with $c.\circ$, to localize a party of the channel $c$, which will be viewed as a piece of quantum information encoded into the channel $d$.
Realize that the other party of the channel $c$ held by Alice does not change during the process, as Alice does not notice the change happening in the other part of the same channel.
The final step encodes the channel party $c.\circ$ to the channel $d$ by treating $c.\circ$ as a quantum message.

After the above transitions, the later ones in QES are similar to the last few steps of the quantum teleportation algorithm, where we decode the quantum channel $d$ to hand over the quantum channel $c$ to Bob. The transitions are described as follows.

\begin{example}[QES Decoding for Long Distance Channel Building]\label{def:example7}\rm

{\small
\[
\begin{array}{ll}
&
\pard
{\pard{\bscell{\pard{A}{c.\cmsg{c_1}.e}}}
{{\bscell{\pard{R}{d.c.\circ}}}}}
{{\bscell{\pard{\seq{ \creva{d}{w}}{\seq{\creva{a_1}{z}}{\seq{\encode{w}{z}}{B} }}}{d.\circ}}}}
\\[0.2em]
\equiv
&
\pard
{\bscell{\pard{A}{c.\cmsg{c_1}.e}}}
{
\parll{{\bscell{\decode{d}{y}\seq{\csenda{a_1}{y} }{0}}}}{d.c.\circ}
{{\bscell{{\seq{ \creva{d}{w}}{\seq{\creva{a_1}{z}}{\seq{\encode{w}{z}}{B }}}}}}}}
\\[0.2em]
\xrightarrow{d.c.\circ}
&
\pard
{\bscell{\pard{A}{c.\cmsg{c_1}.e}}}
{
\pard{{\bscell{{{\seq{\csenda{a_1}{\cmsg{d}.\cmsg{c.\circ}} }{0}}}}}}
{{\bscell{\pard{\seq{\creva{a_1}{z}}{\seq{\encode{\cmsg{d}}{z}}{B} }}{\cmsg{d}.\up{c.\circ}}}}}}
\\[0.2em]
\xrightarrow{a_1.\cmsg{d}.\cmsg{c.\circ}}
&

\pard{\bscell{\pard{A}{c.\cmsg{c_1}.e}}}
{
\pard{{\bscell{0}}}
{{\bscell{\pard{{\seq{\encode{\cmsg{d}}{\cmsg{d}.\cmsg{c.\circ}}}{B} }}{\cmsg{d}.\up{c.\circ}}}}}}
\\[-.2em]
\longrightarrow
&
\pard{\bscell{\pard{A}{c.\cmsg{c_1}.e}}}
{\pard{{\bscell{0}}}
{{\bscell{\pard{B }{c.\circ}}}}}
\end{array}
\]
}
\end{example}

In the above transitions, we aim to teleport $c.\circ$ from the router to Bob so that Bob can communicate with Alice via the channel $c$.
As we can see, the procedures are the same as the second half of the quantum teleportation transitions in \Cref{def:example4},
except that the classical and quantum residues become $\cmsg{d}.\cmsg{c.\circ}$ and $\cmsg{d}.\up{c.\circ}$, respectively.
After the final recovery encoding step, the channel $d$ disappears, and Bob holds the channel $c.\circ$.
The step essentially transforms the quantum resource in $d$ to be the quantum information of the channel $c.\circ$ by treating $c.\circ$ as a piece of quantum information.
Then, Alice and Bob can communicate through channel $c$ by executing the second half of the quantum teleportation transitions (\Cref{def:example4}) again. In the end, Bob receives the quantum message $\cmsg{c_1}.e$.

QES is an important protocol for long-distance transmissions in HCQN communications.
We will see an evaluation framework can be built on top of the QAM in \Cref{sec:case-study}.

\ignore{
including quantum channel creation ($\downa{c}$), channel swaps ($\upa{c}$), classical message sender ($\csenda{\cmsg{c_1}}{\iota}$) and receiver ($\creva{\delta}{x}$), quantum message encoding ($\encode{c}{\mu}$) and decoding ($\decode{c}{x}$), local assembler ($\encodes{q}{\mu}{x}$) and extractor ($\decodes{q}{x}$).

$\encodes{q}{\mu}{x}$ and $\decodes{q}{x}$

Actions $\downa{c}$ and $\upa{c}$ manipulate membrane contexts; the former creates a shared channel, and the latter implements the quantum routing swap.
Actions $\csenda{\cmsg{c}}{\iota}$ and $\creva{\delta}{x}$ respectively send and receive messages and are inherited from $\Pi$-calculus \cite{DBLP:conf/concur/Sangiori93}.
The former sends a classical message through a classical channel $\cmsg{c}$, while the latter acts as a receiver waiting for a quantum or a classical message.
Actions $\encode{c}{\mu}$ and $\decode{c}{x}$ are \textit{global encoder and decoder} operations that manipulate a shared channel's content.
Actions $\encodes{q}{\mu}{x}$ and $\decodes{q}{x}$ are \textit{local assembler and extractor} that manipulate local quantum messages within a process.
Action $\msga{c}{x}$ transfers a shared channel between two membranes to become a channeled message, \textit{owned} by a process in a membrane.
QAM assumes that $\alpha$-conversions are applied automatically for variable bindings.
The variable $x$ in assemblers ($\encodes{q}{\mu}{x}$), decoders, and extractors ($\decode{c}{x}$ and $\decodes{c.\iota}{x}$), transfer operations $\msga{c}{x}$, and receipts ($\creva{\delta}{x}$) introduce new binding variables in the rest of the process.
}

\ignore{

The left membrane represents Alice and the right represents Bob.
Alice and Bob first collaboratively create a quantum channel $c$. Alice then encodes ($\encode{c}{e}$) a quantum message $e$ into the channel, which is decoded ($\decode{c}{x}$) by Alice immediately.
At this moment, Alice receives a classical residue $x$, while Bob receives a classical residue through a receiver operation ($\creva{c}{u}$). Alice then sends to Bob the classical residue through a classical sender $\csenda{\cmsg{c}}{x}$, and Bob locally assembles ($\encodes{z}{u}{v}$) it to recover the quantum message $e$.
The example hints about the QAM operations, which are discussed in detail below.

Alice has an additional process $\downa{c} 0$, representing the \textcolor{orange}{orange} party,
which collaboratively, with the first action of Alice's first process, creates a unique shared channel, named $c$. Alice then ($\encode{c}{\cmsg{e}}$) encodes a classical message $\cmsg{e}$ to the shared channel $c$.
Such an action is asynchronous, but its effect is global to membranes that share the channel.
After that, Alice asks for the ownership transformation ($\msga{c}{x}$) of the channel $c$,
which turns $c$ and its content ($\cmsg{e}$) as a message $x$.
Alice and Bob then create ($\downa{d}$) another shared channel $d$.
Immediately afterward,
Alice encodes the channeled message $c.\cmsg{e}$ into $d$, and decodes ($\decode{d}{y}$) the channel,
which is synchronized with the same channel receiver, e.g.,
$\creva{d}{u}$ in Bob's side;
once the synchronization happens, Alice ($y$) is assigned a classical
residue ($\cmsg{c.\cmsg{e}}$) and Bob ($u$) is assigned with a quantum residue
($\up{(c.\cmsg{e})}$).
Alice's final action ($\csenda{\cmsg{c}}{y}$) sends the classical residue to Bob through a classical channel $\cmsg{c}$;
once Bob receives ($\creva{\cmsg{c}}{z}$) the classical residue sent from Alice, he locally recovers the message $c.\cmsg{e}$ by assembling ($\encodes{z}{u}{v}$) the classical and quantum residues. Finally, he acquires message $\cmsg{e}$ by $\decodes{v}{w}$.
In QAM, a receiver, e.g., $\creva{d}{u}$, acts as a synchronous waiter waiting for a message through a channel $\delta$, either quantum or classical.
}

\section{Extending QAM to Remote Communication}\label{sec:case-study}

We extend the QAM to an evaluation framework, permitting remote communications for analyzing real-world HCQN protocols; their key assumptions are in \Cref{sec:hcqns}.
The extension's key includes location information in membranes, stabilizing channel's destinations, and computing probability success rates for a transmission path.

\subsection{Extending QAM Membranes with Locations}\label{sec:qamsyntax1}

\begin{figure}[t]
\vspace*{-2.5em}
{\centering
\small
  \[\begin{array}{l@{\;\;}lcl@{\quad}l@{\;\;}lcl} 
      \text{Locations} & g,h & \in & \mathbb{G} &
      \text{Intension ID} & n & \in & \mathbb{N} \\[0.2em]
      \text{Success Rate} & p & \in & [0,1] &
   \text{Relation Pair Multiset} & \theta
        & ::= & \overline{h \xleftrightarrow{p} g} \\[0.2em]

    \text{Intension Map} & \xi & ::= & n \to (g,h)
   &
           {\text{Predicates}} & {\varphi} & ::= & \xi * \theta * g * h \to \mathbb{T}
    \end{array}
  \]
}
\vspace*{-1em}
\caption{Extended QAM Syntax. Membranes are now $\scell{\overline{M}}_g$.}
  \label{fig:q-pi-semantics2}
\vspace*{-1.3em}
\end{figure}

As shown in 
\Cref{fig:q-pi-semantics2},
membrane structures are extended to contain location information, as $\scell{\overline{M}}_g$.
Every channel is established to transfer a message between two fixed locations.
We include the concept of intention ID $n\in \mathbb{N}$ for each channel and implement a channel $c$ as $c(n)$.
We also include an intention finite map $\xi$ mapping from intention IDs ($n$) to two locations $(g,h)$, indicating the target and the destination of the channel intended to establish.
In the paper, we abbreviate the channel $c(n)$ as $c$ and provide a function $\cn{im}(c)$ to access $c$'s intention ID, e.g., $\cn{im}(c)=n$.
Every execution is associated with a graph $\theta$, indicating the success rates of generating a quantum channel between two locations.
For example, in \Cref{fig:ctdan}, Alice and the router can create a channel with a $0.5$ success rate.
Function $\theta(g,h)$ produces the success rate connecting $g$ and $h$.

We now extend the QAM judgment to be $\overline{P}\xrightarrow{p(g,h)}_{\{\xi,\varphi,\theta\}} \overline{Q}$. Here, $\overline{P}$ and $\overline{Q}$ are configurations, with membranes extended with location information as in \Cref{fig:q-pi-semantics2}. The predicate $\varphi$ judges if a transition is valid by querying the transition labels.
In the new system, we modify the label to $p(g,h)$, meaning that a channel connects $g$ and $h$, and the success rate is  $p$.
The main utility of the extended QAM system is to evaluate the success rates of establishing long-distance channels.
Notice that the system is \textit{not} stochastic, e.g., Markov-chains \cite{markov1906,markov1907}.
Success rates $p$ in a label are viewed as a property of the transition edge instead of the edge's transition probability.

\vspace*{-1em}
{\small
\begin{mathpar}
\inferrule[Trans]{\overline{P} \xrightarrow{\kappa} \overline{Q} \\\neg \cn{is\_q}(\cn{chan}(\kappa))}
       {\overline{P} \xrightarrow{\kappa}_{\{\xi,\varphi,\theta\}} \overline{Q}  }

\inferrule[Chan]{\overline{P} \xrightarrow{c} \overline{Q}\\ \cn{loc}(\overline{P},c)=(g,h)\\\varphi(\xi, \theta,g,h)}
       {\overline{P} \xrightarrow{(\theta(g,h))(g,h)}_{\{\xi,\varphi,\theta\}} \overline{Q}  }
  \end{mathpar}
\begin{mathpar}
\inferrule[Msg]{\overline{P} \xrightarrow{c.\mu} \overline{Q}\\\cn{loc}(\overline{P},c,\cn{chan}(\mu))=(g,h)\\\varphi(\xi, \theta,g,h)}
       {\overline{P} \xrightarrow{(1)(g,h)}_{\{\xi,\varphi,\theta\}} \overline{Q}  }
  \end{mathpar}
}
\vspace*{-0.6em}

Similar to the extendability in CHAM, we extend QAM to include the three rules. Rule \rulelab{Trans} turns a QAM transition, without quantum channel labels (checked by $\cn{is\_q}$), into a transition in the extended QAM.
When creating a channel, Rule \rulelab{Chan} finds the rate in the intransitive set $\theta$ for the edge $(g,h)$,
where $g$ and $h$ are found by the function $\cn{loc}(\overline{P},c)$ searching the two locations that $c$ resides.
Rule \rulelab{Msg} deals with the case when a decoding happens. In QES (\Cref{sec:enswaps}), the ownership of a quantum channel $\cn{chan}(\mu)$ is transferred to a third party,
so we need to access the location information of the two new parties through the function $\cn{loc}(\overline{P},c,\cn{chan}(\mu))$, producing the two different locations that the two channels $c$ and $\cn{chan}(\mu)$ are connected to. In this process, the success rate of the transition is $1$.

Consider \Cref{def:example6}, we extend the membranes from $\scell{-}$ to $\scell{-}_g$, where $g$ can be $A$, $R$, and $B$, referring to the locations of Alice, the router, and Bob.
We include intention IDs $n$ and $n'$ for channels $c$ and $d$. In $\xi$, $n$ maps to $(A,B)$ and $n'$ maps to $(R,B)$.
We also need to implement the $\theta$ map as the diagram shown in \Cref{fig:ctdan} as:

\vspace*{-0.6em}
{\small
\begin{center}
$
\theta \triangleq 
\{A\xleftrightarrow{0.5} R, R \xleftrightarrow{0.5} B\}
\qquad
\xi \triangleq 
\{n\to (A,B), n' \to (R,B)\}
$
\end{center}
}
\vspace*{-0.6em}

Here, we can set the predicate $\varphi(\xi, \theta,g,h)$ always to answer \cn{true}, provided that $(g,h)$ is a valid edge in $\theta$.
At the time when the channels $c$ and $d$ are being created (\Cref{def:example6}), we apply rule \rulelab{Chan} twice, and they raise labels $0.5(A,R)$ and $0.5(R,B)$.
At the time when we decode the channel $d$ (\Cref{def:example7}), with the transition label being $d.c.\circ$,
we apply rule \rulelab{Msg} to prolong the channel $d$ to connect the path between $A$ and $B$, with the success rate $1$.
Along the chain of the successful creation of a long-distance channel from $A$ to $B$, if we compute the total success rate along the above three steps, it is $0.5* 0.5 * 1=0.25$.

\subsection{Defining the QPass and QCast Protocols}\label{sec:case-qpass}

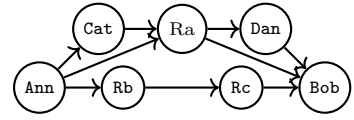
\begin{wrapfigure}{r}{4.8cm}
{\scriptsize
    \begin{tikzpicture}[node distance={1.1cm}, thick, main/.style = {draw, circle}] 
    \node[main] (1) {\cn{Cat}}; 
    \node[main] (2) [right of=1] {Ra};
    \node[main] (3) [right of=2] {\cn{Dan}};
    \node[main] (4) [below left of=1] {\cn{Ann}};
    \node[main] (5) [below right of=3] {\cn{Bob}};
    \node[main] (6) [below left of=2] {\cn{Rb}};
    \node[main] (7) [below right of=2] {\cn{Rc}};
    \draw[->] (1) -- node[midway, above] {} (2); 
    \draw[->] (2) -- node[midway, above] {} (3);  
    \draw[->] (4) -- node[midway, above left] {} (1);  
    \draw[->] (4) -- node[midway, above] {} (6);  
    \draw[->] (6) -- node[midway, above] {} (7);  
    \draw[->] (7) -- node[midway, above] {} (5); 
    \draw[->] (3) -- node[midway, above right] {} (5);   
    \draw[->] (4) -- node[near end, below left] {} (2);   
    \draw[->] (2) -- node[near start, below right] {} (5);   
    \end{tikzpicture}
    \caption{Example path graph.}
    \label{fig:anndan}
}
\end{wrapfigure}

The QPass and QCast protocols \cite{10.1145/3387514.3405853} define the routing behaviors of HCQN communications.
In defining the two protocols, we utilize the above-extended QAM semantics by modifying the path graph $\theta$ for a specific routing table and defining the predicate $\varphi$ for a specific protocol.
Here, we model the simplified versions of the QPass and QCast protocols,
where the simplified QPass protocol permits a creation (applying a rule \rulelab{Chan}) and extension (applying a rule \rulelab{Msg}) of a quantum channel $c$, only if the edge that the channel connects, as $(g,h)$, are in the shortest path between the starting and the final destinations ($\xi(\cn{im}(c))$).
The simplified QCast protocol permits the creation or extension of a channel if the edge $(g,h)$ is in the path between the starting and the final destinations, where the sum of the assumed success rates is maximized.

\vspace*{-0.5em}
{\small
\begin{center}
$
\theta(\overline{p})\triangleq\{\cn{Ann}\xleftrightarrow{p_1} \cn{Cat}, \cn{Ann}\xleftrightarrow{p_2} \cn{Ra}, \cn{Ann}\xleftrightarrow{p_3} \cn{Rb}, \cn{Cat}\xleftrightarrow{p_4} \cn{Ra}, \cn{Cat}\xleftrightarrow{p_5} \cn{Ra},  \cn{Ra}\xleftrightarrow{p_6} \cn{Dan},  \cn{Ra}\xleftrightarrow{p_7} \cn{Bob}, \cn{Dan}\xleftrightarrow{p_8} \cn{Bob}, \cn{Rb}\xleftrightarrow{p_9} \cn{Rc}, \cn{Rc}\xleftrightarrow{p_{10}} \cn{Bob}\}
$
\end{center}
}
\vspace*{-0.5em}

As an example, we define an example routing map (\Cref{fig:anndan}) as $\theta(\overline{p})$ above, with $\overline{p}$ being a list of assumed success rates for creating the channel between two parties in an edge. 
In the example map, for simplicity, we define the destination to be $\cn{Bob}$ for every quantum channel.
We discuss the predicate definition for the QPass and QCast protocols below.

\noindent\textbf{The QPass Protocol.}
In defining the QPass protocol, the predicate $\varphi_p$ is defined as:

\vspace*{-0.5em}
{
\begin{center}
$\varphi_p(\xi, \theta(\overline{p}), g, h)\triangleq (g,h) \in \cn{shortest}(\theta(\overline{p}),\xi(\cn{im}(c)))$
\end{center}
}
\vspace*{-0.5em}

A valid quantum channel creation or extension is validated if the edge generated from a transition is in the shortest path from the left node of the channel to the destination.
An example of transmitting a message from \cn{Ann} to \cn{Bob} in \Cref{fig:anndan} is shown below.

\vspace*{-0.5em}
\begin{example}[QPass Example Code]\label{def:example8}\rm
Below, we reuse the code in \Cref{def:example6} to define the behaviors of the membranes \cn{Ann}, \cn{Ra}, and \cn{Bob} for transmitting a message $\cmsg{c_1}.e$ from \cn{Ann} to \cn{Bob} via the intermediate router \cn{Ra}.
Here, $A={{\seq{ \decode{c}{x}}{\seq{{\csenda{a}{x}}}{0}}}}$, $R=\decode{d}{y}\seq{\csenda{a_1}{y} }{0}$, and $B=\seq{\creva{c}{u}\creva{a}{t}\encode{u}{t}}{0}$.

{\small
{
$
\pard{\pard{\bscell{\pard{\seq{\downa{c}}{\seq{ \encode{c}{\cmsg{c_1}}}{A}}}{\pard{\cmsg{c_1}.e}{\circ}}}_{\cn{Ann}} }
{{\bscell{\pard{\seq{\downa{c}\downa{d}\encode{d}{c}}{R}}{\pard{\circ}{\circ}}}_{\cn{Ra}} }}} 
{
{{\bscell{\pard{\seq{\downa{d}}{\seq{ \creva{d}{w}}{\seq{\creva{a_1}{z}}{\seq{\encode{w}{z}}{B }}}}}{\circ}}_{\cn{Bob}} }}}
$
}
}
\end{example}
\vspace*{-0.5em}

The above example defines a small portion of the system in \Cref{fig:anndan}, where we send a message from $\cn{Ann}$ to $\cn{Bob}$ via the router $\cn{Ra}$, as we use the same $\xi$ map above.
The transitions of \Cref{def:example8} are listed as follows.

\vspace*{-0.5em}
\begin{example}[QPass Example Transitions]\label{def:example9}\rm

{\noindent\small
\[
\begin{array}{ll}
...\xrightarrow{p_2(\cn{Ann},\cn{Ra})}
&
\pard{\pard{\bscell{\pard{A}{c.\cmsg{c_1}.e}}_{\cn{Ann}}}
{\bscell{\pard{\pard{\seq{\downa{d}\encode{d}{c}}{R}}{\circ}}{{c.\circ}}}_{\cn{Ra}}}}
{{\bscell{\pard{\seq{\downa{d}\creva{d}{w}}{\seq{\creva{a_1}{z}}{\seq{\encode{w}{z}}B }}}{\circ}}_{\cn{Bob}}}}
\\[0.3em]

...\xrightarrow{p_7(\cn{Ra},\cn{Bob})}
&
\pard{\pard{\bscell{\pard{A}{c.\cmsg{c_1}.e}}_{\cn{Ann}}}
{\bscell{\pard{\pard{\seq{\encode{d}{c}}{R}}{d.\circ}}{{c.\circ}}}_{\cn{Ra}}}}
{{\bscell{\pard{\seq{ \creva{d}{w}}{\seq{\creva{a_1}{z}}{\seq{\encode{w}{z}}B }}}{d.\circ}}_{\cn{Bob}}}}
\\[0.3em]
...\xrightarrow{1(\cn{Ann},\cn{Bob})}
&
\pard
{\bscell{\pard{A}{c.\cmsg{c_1}.e}}_{\cn{Ann}}}
{
\pard{{\bscell{{{\seq{\csenda{a_1}{\cmsg{d}.\cmsg{c.\circ}} }{0}}}}_{\cn{Ann}}}}
{{\bscell{\pard{\seq{\creva{a_1}{z}}{\seq{\encode{\cmsg{d}}{z}}{B} }}{\cmsg{d}.\up{c.\circ}}}_{\cn{Ann}}}}}
\\[0.2em]
...
\end{array}
\]
}
\end{example}
\vspace*{-0.5em}

In the above example, the creation of the channels $c$ and $d$ are validated in QPass,
because the locations where the channels are connecting belong to the shortest path from \cn{Ann} to \cn{Bob} and $\cn{Ra}$ to $\cn{Bob}$ in \Cref{fig:anndan}.
The last step performs a decoding, as in \Cref{def:example7},
and it is essentially a step to convey a party of the channel $c$ from $\cn{Ra}$ to $\cn{Bob}$, stated in \Cref{def:decoding-good}.
This procedure is validated in QPass by applying rule \rulelab{Msg} with the predicate definition $\varphi_p$ above,
because the edge $(\cn{Ann},\cn{Bob})$ is also in the shortest path from \cn{Ann} to \cn{Bob}.

\noindent\textbf{The QCast Protocol.}
In defining the QCast protocol, the predicate $\varphi_c$ is defined as:

\vspace*{-0.5em}
{\small
\begin{center}
$\varphi_c(\xi, \theta(\overline{p}), g, h)\triangleq (g,h) \in \cn{max}(\theta(\overline{p}),\xi(\cn{im}(c)))$
\end{center}
}
\vspace*{-0.5em}

In the case of QCast, we validate a quantum channel creation or relocation,
if the edge generated from a transition is in the path that maximizes the possibility of sending a message from the source to the target.
We examine the QCast behavior by reusing the example in \Cref{def:example8} as the following.

\vspace*{-0.5em}
\begin{example}[QCast Example Transitions]\label{def:example10}\rm

{\noindent\small
\[
\begin{array}{lll}
...\xrightarrow{p_2(\cn{Ann},\cn{Ra})}
&
\pard{\pard{\bscell{\pard{A}{c.\cmsg{c_1}.e}}_{\cn{Ann}}}
{\bscell{\pard{\pard{\seq{\downa{d}\encode{d}{c}}{R}}{\circ}}{{c.\circ}}}_{\cn{Ra}}}}
{{\bscell{\pard{\seq{\downa{d}\creva{d}{w}}{\seq{\creva{a_1}{z}}{\seq{\encode{w}{z}}B }}}{\circ}}_{\cn{Bob}}}}
&
\textcolor{red}{\textbf{?}}
\\[0.5em]

...\xrightarrow{p_7(\cn{Ra},\cn{Bob})}
&
\pard{\pard{\bscell{\pard{A}{c.\cmsg{c_1}.e}}_{\cn{Ann}}}
{\bscell{\pard{\pard{\seq{\encode{d}{c}}{R}}{d.\circ}}{{c.\circ}}}_{\cn{Ra}}}}
{{\bscell{\pard{\seq{ \creva{d}{w}}{\seq{\creva{a_1}{z}}{\seq{\encode{w}{z}}B }}}{d.\circ}}_{\cn{Bob}}}}
&
\textcolor{red}{\textbf{?}}
\\[0.5em]
...\xrightarrow{1(\cn{Ann},\cn{Bob})}
&
\pard
{\bscell{\pard{A}{c.\cmsg{c_1}.e}}_{\cn{Ann}}}
{
\pard{{\bscell{{{\seq{\csenda{a_1}{\cmsg{d}.\cmsg{c.\circ}} }{0}}}}_{\cn{Ann}}}}
{{\bscell{\pard{\seq{\creva{a_1}{z}}{\seq{\encode{\cmsg{d}}{z}}{B} }}{\cmsg{d}.\up{c.\circ}}}_{\cn{Ann}}}}}
&
\textcolor{red}{\textbf{?}}
\\[0.5em]
...
\end{array}
\]
}
\end{example}
\vspace*{-0.5em}

In the above example, the three transitions are marked with question marks because their validity depends on the success rate values ($\overline{p}$) given in $\theta(\overline{p})$.
For example, if we set $p_2$, $p_9$ and $p_{10}$ to be $0.5$, and the other success rates to be $0.3$, the maximized success rate path is the path: $\cn{Ann} \to \cn{Rb}\to \cn{Rc} \to \cn{Bob}$.
Thus, the transitions in \Cref{def:example10} are all invalidated because the edges are not in the maximized success rate path.
One thing worth noting is that the success rates ($\overline{p}$) are fixed before the program execution,
and they represent the guesses of success rates to construct channels between two distinct locations.
The original QCast paper has an algorithm to adjust the success rates when network transmissions are evolving, which will be a future study of the QAM.

\section{Behavioral Refinement and Equivalence} \label{sec:refinement}

To illustrate the QAM, we use the traditional trace refinement,
Here, we define trace-refinement relations on labeled transition systems (LTS) for the QAM and the extended QAM, following traditional notions in process algebra.
We view the QAM system as LTS $(\Cs,\As,\rightarrow)$, with $\Cs$ being the set of configurations and $\As$ containing transition labels ($\kappa$),
while the Extended QAM is another LTS $(\Cs_d,\As_d,\rightarrow_{\{\xi,\varphi,\theta\}})$,
with $\Cs_d$ being the set of configurations extended with location information, and $\As_d$ containing transition labels, having the form $p(g,h)$ in the extended QAM.
We also use $::$ as a label sequence concatenation operation and $\epsilon$ as the empty sequence.
We first define QAM finite traces and trace refinement based on Back and von Wright's definition \cite{10.1007/978-3-540-48654-1_28}, which provides an abstraction of transitions.

\vspace*{-0.5em}
\begin{definition}[QAM Finite Trace Set]\label{def:qamtraces}\rm
Given the QAM LTS $(\Cs,\As,\rightarrow)$, the trace set $\Tts(C)$ for configuration $C\in\Cs$ is defined inductively as:
  \begin{itemize}
  \item $\epsilon \in \Tts(C)$.
  \item $t \in \Tts(C)$, given $C \longrightarrow C'$ and $t \in \Tts(C')$.
  \item $\kappa::t \in \Tts(C)$, given $C \xrightarrow{\kappa} C'$ and $t \in \Tts(C')$.
  \end{itemize}

\end{definition}
\vspace*{-0.5em}

\begin{definition}[QAM Trace Refinement]\label{def:traceeq}\rm
Given $\Qs=(\Cs,\As,\rightarrow)$, and $C\;C' \in \Cs$, we say that $C$ trace refines $C'$, written as $C \sqsubseteq_{\Qs} C'$, iff $\Tts(C)\subseteq \Tts(C')$.
\end{definition}
\vspace*{-0.5em}

The above trace-refinement definition is suitable for QAM programs.
For an extended QAM execution, success rates can be attached to a transition label.
We need to extend the above trace-refinement relation to include probability success rates, treated as a special property of the labels.
We first define a likeliness finite trace set.

\begin{definition}[Likeliness Finite Trace Set]\label{def:dqamtraces}\rm
Given $\xi$, $\theta$, and an extended QAM system $\Qs=(\Cs_d,\As_d,\rightarrow_{\{\xi,\varphi,\theta\}})$, the trace set $\Tts(\Qs,C)$ over $C\in\Cs_d$ is defined inductively as:

\vspace*{-0.5em}
  \begin{itemize}
  \item $\epsilon \in \Tts(\Qs,C)$.
  \item $t \in  \Tts(\Qs,C)$, given $C \darrow{\xi,\varphi,\theta} C'$ and $t \in  \Tts(\Qs,C')$.
  \item $p(g,h)::t \in \Tts(C)$, given $C \xrightarrow{(p)(g,h)}_{\{\xi,\varphi,\theta\}} C'$ and $t \in \Tts(C')$.
  \end{itemize}

\end{definition}
\vspace*{-0.5em}

The likeliness trace ordering relation and QAM trace refinement can now be defined as:

\vspace*{-0.5em}
\begin{definition}[Likeliness Trace Order]\label{def:ptraceeq}\rm
Two QAM finite traces, $\sigma_1$ and $\sigma_2$, are likely ordered, written as $\sigma_1 \preceq \sigma_2$, iff the following holds:

\vspace*{-0.5em}
  \begin{itemize}
  \item $\sigma_1 = \epsilon$ and $\sigma_2 =\epsilon$.
  \item If $\sigma_1 = p(g,h) :: \sigma_1'$, then $\sigma_2 =p'(g,h) :: \sigma_2'$ for some $p'$, and $p \le p'$ and $\sigma_1' \preceq \sigma_2'$.
  \end{itemize}
\end{definition}
\vspace*{-0.5em}

\begin{definition}[Likeliness Trace Subset]\label{def:ptracesub}\rm
Given two likeliness trace sets $S$ and $S'$, $S \subseteq_r S'$ iff, for every $\sigma \in S$, there is $\sigma' \in S'$, such that $\sigma \preceq \sigma'$.
\end{definition}
\vspace*{-0.5em}

\begin{definition}[Likeliness QAM Trace Refinement (LQTR)]\label{def:ptracerefine}\rm
Given $\xi$, $\theta$, and two extended QAM systems $\Qs_1=(\Cs_d,\As_d,\rightarrow^1_{\{\xi,\varphi_1,\theta\}})$ and $\Qs_2=(\Cs_d,\As_d,\rightarrow^2_{\{\xi,\varphi_2,\theta\}})$, and configurations $C_1 \; C_2 \in \Cs$,  we say that $(\Qs_1,C_1)$ likely trace refines $(\Qs_2,C_2)$, written as $(\Qs_1,C_1) \sqsubseteq_r (\Qs_2,C_2)$, iff $\Tts(\Qs_1,C_1)\subseteq_r \Tts(\Cs_2,C_2)$.

\end{definition}
\vspace*{-0.5em}

The LQTR relation is defined based the QAM trace refinement relation, with the modification of labels,
i.e., the extended QAM transition definition in \Cref{sec:qamsyntax1} extends the transition rules in the QAM.
Notice above that the configuration and label sets $\Cs_d$ and $\As_d$ as well as maps $\theta$ and $\xi$ are the same in the two systems.
\ignore{
\vspace*{-0.5em}
\begin{theorem}[LQTR and QTR Trace Containment]\label{def:traceeq1}\rm
Given $C_1$ and $C_2$, $\Qs_d=(\Cs_d,\As_d,\rightarrow_{\{\xi,\varphi,\theta\}})$, such that $(\Qs_d,C_1) \sqsubseteq_r (\Qs_d,C_2)$, 
we then have $C'_1 \sqsubseteq_{\Qs} C'_2$ with $\Qs=(\Cs,\As,\rightarrow)$, and $C_1$ and $C_2$ are equivalent to $C'_1$ and $C'_2$ but removing membrane location information. 
\end{theorem}
\vspace*{-0.5em}

We show above that LQTR is a trace containment of QTR, i.e., every trace in LQTR is a prefix of a trace in QTR.
To relate the configurations in the two systems, we simply remove the location information associated with membranes.
}

The LQTR definition is useful in relating two different HCQN protocols. For example, we can relate the above QPass and QCast protocols to the theorem below:

\vspace*{-0.5em}
\begin{theorem}[QPass and QCast Trace Refinement]\label{def:traceeq2}\rm
Given $\xi$, $\theta(\overline{p})$, and an initial configuration $C\in \Cs_d$ and the QPass and QCast systems as stated in \Cref{sec:case-qpass},
there is an assignment $\overline{p}$, such that (QPass,$C$) $\sqsubseteq_r$ (QCast,$C$).

\end{theorem}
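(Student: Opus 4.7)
The plan is to exhibit a concrete assignment $\overline{p}$ that forces the QPass and QCast predicates $\varphi_p$ and $\varphi_c$ to agree on every query, so that the two extended QAM transition systems collapse to the same LTS, and then invoke reflexivity of $\preceq$ to establish the refinement. The natural candidate is a \emph{uniform} assignment: pick any $p_0 \in (0,1]$ and set $p_i \triangleq p_0$ for every edge in $\theta(\overline{p})$.

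The first step is to prove that under a uniform assignment, the shortest-path predicate coincides with the maximum-success-rate predicate. Given a source $s$ and destination $d$, any simple path $\pi$ from $s$ to $d$ of length $k$ has accumulated success rate $p_0^{k}$. Since $p_0 \in (0,1]$, the map $k \mapsto p_0^{k}$ is monotonically non-increasing, so a path maximizes the product iff it minimizes the number of edges. Hence $\cn{shortest}(\theta(\overline{p}), s, d) = \cn{max}(\theta(\overline{p}), s, d)$ as edge sets, and therefore for every membrane query $(g,h)$ and every channel intention $\cn{im}(c)$,
\[
\varphi_p(\xi, \theta(\overline{p}), g, h) \;=\; \varphi_c(\xi, \theta(\overline{p}), g, h).
\]

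The second step is to lift this predicate equality to the transition systems $\Qs_1 = (\Cs_d, \As_d, \rightarrow^1_{\{\xi,\varphi_p,\theta\}})$ and $\Qs_2 = (\Cs_d, \As_d, \rightarrow^2_{\{\xi,\varphi_c,\theta\}})$. By inspection of the three extension rules \rulelab{Trans}, \rulelab{Chan}, and \rulelab{Msg}, the only place the predicate appears as a side condition is in \rulelab{Chan} and \rulelab{Msg}; the predicate equality above then gives $\rightarrow^1_{\{\xi,\varphi_p,\theta\}} \;=\; \rightarrow^2_{\{\xi,\varphi_c,\theta\}}$ as relations. A straightforward induction on \Cref{def:dqamtraces} then yields $\Tts(\Qs_1, C) = \Tts(\Qs_2, C)$.

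The final step is the refinement itself. Given any $\sigma \in \Tts(\Qs_1, C)$, pick $\sigma$ itself as the witness in $\Tts(\Qs_2, C)$; reflexivity of $\le$ on $[0,1]$ gives $\sigma \preceq \sigma$ by induction on its length (\Cref{def:ptraceeq}), so $\Tts(\Qs_1, C) \subseteq_r \Tts(\Qs_2, C)$ by \Cref{def:ptracesub}, which is exactly $(\text{QPass}, C) \sqsubseteq_r (\text{QCast}, C)$ by \Cref{def:ptracerefine}.

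The only real subtlety, and the step I expect to require the most care, is the argument that $\cn{shortest}$ and $\cn{max}$ return the \emph{same} edge set under uniform weights rather than merely intersecting sets of equally-good paths. If $\cn{shortest}$ and $\cn{max}$ are defined as the union of edges appearing on \emph{any} optimal path (which is the natural reading that makes the rules nondeterministic), then uniform weights make the two notions of optimality coincide path-for-path and the edge sets agree; if instead they are defined as a single canonical path, one must argue that a single canonical tie-breaking rule may be chosen to agree on both sides, which is always possible. Either way the collapse is immediate, but pinning down the definition of these two auxiliary operations is the one place where the proof could become delicate.
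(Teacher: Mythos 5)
Your proposal follows essentially the same route as the paper's own proof: choose a uniform assignment $\overline{p}$ so that the shortest-path and maximum-rate criteria coincide, conclude that $\varphi_p$ and $\varphi_c$ (and hence the two transition relations) are identical, and obtain the refinement by reflexivity of $\preceq$. The paper simply fixes all rates to $0.5$ and asserts the coincidence in one line; you supply the missing monotonicity argument and the trace-level induction, which is a strict improvement in rigor.

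One caveat worth flagging: your justification that $\cn{max}$ selects shortest paths under uniform weights rests on reading a path's success rate as the \emph{product} $p_0^{k}$ of its edge rates. \Cref{sec:case-qpass}, however, defines the simplified QCast criterion as maximizing the \emph{sum} of the rates along a path (and the worked example in \Cref{def:example10} is consistent with the sum reading), under which a uniform assignment favors the \emph{longest} path and the two predicates need not coincide. This is a defect the paper's own one-line proof shares, so your argument is no worse off; but if $\cn{max}$ is literally sum-maximization, you would need either to switch to product semantics or to choose $\overline{p}$ non-uniformly so that the two optimal edge sets agree.
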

\vspace*{-0.5em}

\section{Related Work}
\label{Related Work}

\noindent\textbf{Quantum Process Algebra.}
The QAM's design was inspired by several existing quantum process calculi:
qCCS \cite{10.1145/1507244.1507249,10.1145/2400676.2400680,10.1007/978-3-030-45237-7_2},
Communicating Quantum Processes (CQP) \cite{10.1145/1040305.1040318},
quantum model checker (MQC) \cite{mqcwork,mqcworkexample},
QPAlg \cite{10.1145/977091.977108}, and eQPAlg \cite{haider2020extended}.
These process calculi enhance existing message-passing models, such as CSP \cite{Hoare:1985:CSP:3921} and $\Pi$-calculus \cite{MILNER19921}, to define HCQN protocols.
They do not intend to provide abstract semantics and a new communication model for specifically describing HCQNs.
First, they did not define the unique properties of quantum resources.
For example, qCCS allows the message passing of variables referring to quantum resources and permits their remote controls; thus, HCQN protocols that do not meet the physical realities can be defined in qCCS. Second, their message-passing views did not capture the lifetime properties of quantum channels, as we describe in \Cref{sec:design-resource}.
Third, all of the previous quantum process calculi are not based on abstract semantics, which is the mathematical characterization of program behaviors.
They mainly admit concrete operational semantics instead of abstract ones by integrating quantum circuits and states in the syntax of some process algebra, e.g., CQP and qCCS extend CSP \cite{Hoare:1985:CSP:3921} and $\Pi$-calculus \cite{MILNER19921} with operations describing quantum circuits and semantic states that do not appear in their language operation syntax. 
The key to process calculi is to provide insight into HCQN communications.
Although quantum circuit semantics provides a thorough story of HCQN protocols, many subtleties in these protocols might be glossed over, e.g., the projective nondeterminism.
It would be unfortunate for protocol designers to examine the projective nondeterminism behavior through a heavy study of quantum circuit states,
especially considering the fact that projective nondeterminism is essential in HCQN communications, and a system should show the behavior directly in front of users.

\noindent\textbf{Traditional Process Algebra.}
Communicating Sequential Processes (CSP) \cite{Hoare:1985:CSP:3921} and $\Pi$-calculus \cite{MILNER19921} are process calculi suitable for defining concurrent systems based on the message-passing model.
Several bisimulation and trace-refinement protocol verification methodologies exist for CSP and the $\Pi$-calculus \cite{FDR2,fdr3,DBLP:conf/concur/Sangiori93}.
As noted earlier, the Chemical Abstract Machine \cite{BERRY1992217} and Linda \cite{1663305} are the inspiration of the QAM.
Membrane computing \cite{PAUN2000108}, Modulo structural operational semantics \cite{MOSSES2004195}, and the K framework \cite{rosu-serbanuta-2010-jlap}
are three general language definition frameworks for defining language semantics formally.

\noindent\textbf{Quantum Network Protocols.}
Quantum teleportation \cite{PhysRevLett.70.1895,Rigolin_2005} serves as the basis for quantum communication between two parties. Julia-Diaz {\it et al.} \cite{twoqubittele} provides a two-qubit quantum teleportation protocol. Superdense coding \cite{PhysRevLett.69.2881} encodes a classical message into a quantum channel.
Quantum routing investigates long-distance quantum message transmission, with QES being one of the promising protocols for the task \cite{fundamentallimits,aam9288,10.1145/3386367.3431293}.
QPass and QCast are protocols based on the quantum-swap algorithm \cite{10.1145/3387514.3405853} to maximize the transmission chances, through static and semi-dynamic analyses.
Researchers developed their circuit implementations \cite{PhysRevResearch.4.043064,10.1145/3341302.3342070} and new protocols for enhancing the reliability \cite{Pirker_2019}.
Chakraborty {\it et al.} \cite{https://doi.org/10.48550/arxiv.1907.11630} provided an alternative routing protocol to permit distributed routing.
Li {\it et al.} \cite{10.48550/arxiv.2111.07764} and Caleffi \cite{8068178} provide systems to improve transmission chances and message delivery rates.
The refinement of HCQN network protocols strongly motivates the QAM's development.

\section{Conclusion and Future Work} \label{sec:conclusions}

We propose the Quantum Abstract Machine (QAM), admitting an abstract semantics, as a communication model for HCQN communications,
which captures the important aspects of HCQNs and permits the definition and analysis of HCQN protocols;
they are different from the behaviors of classical network communications based on message-passing models.
Many HCQN properties, such as \Cref{def:no-cloning-good,def:decoding-good}, are surfaced in the QAM due to our abstract semantics. 
We also show the QAM semantics can be extended to an evaluation framework suitable for quantum resource analyses for real-world HCQN protocols, such as QPass and QCast,
and utilize classical trace refinement relations, as an example of the QAM's extendability for analytical tools, to analyze real-world protocols.

As one of the future work, we plan to establish a temporal logic model-checking environment on top of the QAM to verify HCQN protocols.
Supporting HCQN security protocol analyses will also be a future direction for the QAM.
The encoding and meet operations ($\odot$) in the QAM are extensible for the task of defining different quantum operations for enforcing cryptographic methodologies.
One example extension is given in \Cref{sec:qamsecurity}.
In addition, we plan to develop a concurrent quantum circuit language that inherits the properties and restrictions from the QAM so that every defined physical level protocol in the language is an implementable one following physical laws. A small example is given in \Cref{sec:qamsemproof}. 
 
\bibliography{reference}

\newpage
\appendix

\section{Background:  Quantum Computing and HCQNs}
\label{sec:background}

\noindent\textbf{\textit{Quantum States.}} A quantum state consists of one or more quantum bits (\emph{qubits}). A qubit can be expressed as a two-dimensional vector $\begin{psmallmatrix} \alpha \\ \beta \end{psmallmatrix}$; $\alpha,\beta$ are complex numbers such that $|\alpha|^2 + |\beta|^2 = 1$.  The $\alpha$ and $\beta$ are called \emph{amplitudes}. 
We frequently write the qubit vector as $\alpha\ket{0} + \beta\ket{1}$, with $\ket{0} = \begin{psmallmatrix} 1 \\ 0 \end{psmallmatrix}$ and $\ket{1} = \begin{psmallmatrix} 0 \\ 1 \end{psmallmatrix}$. When both $\alpha$ and $\beta$ are non-zero, we can think of the qubit as being ``both 0 and 1 at once,'' a.k.a. a \emph{superposition}. For example, $\frac{1}{\sqrt{2}}(\ket{0} + \ket{1})$ is an equal superposition of $\ket{0}$ and $\ket{1}$. 
We can join multiple qubits together to form a larger quantum state using the \emph{tensor product} ($\otimes$) from linear algebra. For example, the two-qubit state $\ket{0} \otimes \ket{1}$ (also written as $\ket{01}$) corresponds to vector $[~0~1~0~0~]^T$. 
If a multi-qubit state cannot be expressed as the tensor of individual states; such states are called \emph{entangled}. One example is the \emph{Bell pair} $\frac{1}{\sqrt{2}}(\ket{00} + \ket{11})$, which is frequently used as a quantum channel in HCQN protocols as the two qubit elements in the pair are used by two parties to synchronize information. 
Entangled states lead to exponential blowup: a general $n$-qubit state must be described with a $2^n$-length vector rather than $n$ vectors of length two.

\noindent\textbf{\textit{Quantum Computation: From Circuits to Task Blocks.}}
Quantum computation is essentially hybrid, containing both quantum and classical components so that they can collaboratively finish a task (the \emph{QRAM model}~\cite{Knill1996}). Computation on a large quantum state consists of a series of \emph{quantum operations}, each of which acts on a subset of qubits in the state. 
In the standard presentation, quantum computations are expressed as \emph{circuits}, e.g., \Cref{fig:channel-circuit-example}, in which each horizontal wire represents a qubit, boxes on the wires indicate quantum operations, or \emph{gates}, and \textbf{bold} lines represent classical hardware.
Applying a gate to a state \emph{evolves} the state.
The traditional semantics is expressed by multiplying the state vector by the gate's corresponding matrix representation; $n$-qubit gate are $2^n$-by-$2^n$ matrices. 
Except for measurement gates, a gate's matrix must be \emph{unitary}, preserving the unitarity invariant of quantum states' amplitudes. 
A \emph{measurement} operation extracts classical information from a quantum state. Measurement collapses the quantum state to a basis state with a probability related to the state's amplitudes. For example, measuring $\frac{1}{\sqrt{2}}(\ket{0} + \ket{1})$ collapses the state to $\ket{0}$ with probability $\frac{1}{2}$ and likewise for $\ket{1}$, returning classical value $0$ or $1$, respectively. 
In a Bell pair $\frac{1}{\sqrt{2}}(\ket{00} + \ket{11})$, measuring one qubit results in having the same bit appearing in the other, e.g., if the first bit is measured as $0$, the second-bit results in $0$; such property enables Bell pairs being quantum channels.

While circuit expressions are common, researchers utilized \emph{task blocks}, consisting of groups of gates, to describe algorithms.
Such blocks appear repetitively in quantum circuits, such as the \cn{Bell} block appearing in \Cref{fig:channel-circuit-example,fig:channel-block-example} for preparing a Bell pair by applying the block on a two-qubit state $\ket{0}\otimes\ket{0}$,
 and the \cn{Decode} block decodes a quantum channel to send a quantum message from one party to the other.
We model these repetitive task blocks, appearing in many HCQN protocols, as operations in the QAM, regardless of their circuit details, such as qubit size.

\ignore{ 
\begin{figure}[t]
{\centering
\hspace*{-1.2em}

\begin{minipage}[b]{.5\textwidth}
                 \includegraphics[width=1\textwidth]{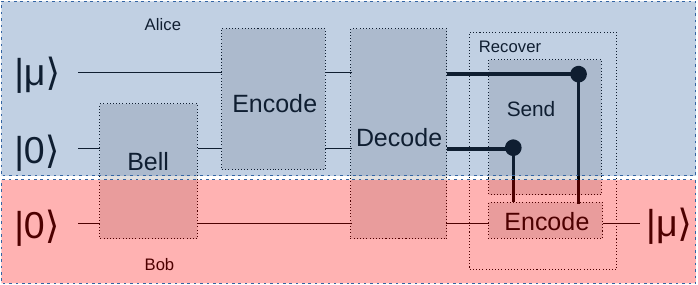}
            \caption{Teleportation Task Blocks}
            \label{fig:background-circuit-examplea}
 \end{minipage}
\hfill{}
\begin{minipage}[b]{.45\textwidth}
                 \includegraphics[width=1\textwidth]{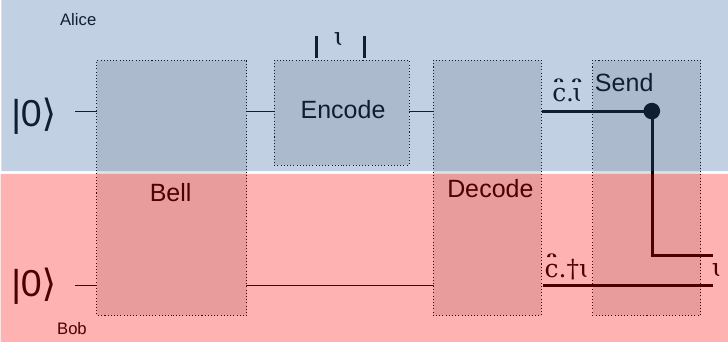}
            \caption{Superdense Coding Task Blocks}
            \label{fig:background-circuit-exampleb}
 \end{minipage}
}
\end{figure}
}
\noindent\textbf{\textit{No Cloning Theorem.}} 
The \emph{no-cloning theorem} suggests no general way of copying a
quantum value. In quantum circuits, this is related to ensuring the reversible property of unitary gate applications.
For example, the control-not gate in the Bell pair circuit (\Cref{fig:channel-circuit-example}) cannot have two ends to refer to the same qubit.
There is also no general way of copying a quantum qubit if its state is arbitrarily defined. In the classical substitution, one can substitute a value for all occurrence of a variable in a term,
which results in creating possible multiple copies of quantum qubits, violating the no-cloning law.
However, if the quantum qubit state is known, there can be ways to create its copies.
For example, our replications in \Cref{sec:qamsem} reflect the procedure of locally copying a known quantum state.
Quantum channel creation is also linked to no-cloning, as the discussion below.

\begin{figure}[t]
{\centering
  \includegraphics[width=0.7\textwidth]{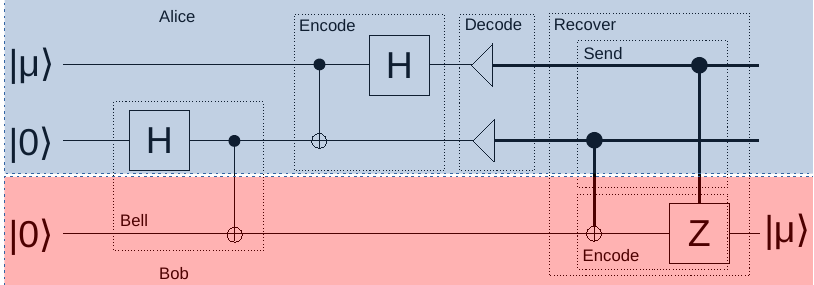}
            \caption{Teleportation Circuit}
            \label{fig:background-circuit-exampleac}
}
\end{figure}

\noindent\textbf{\textit{HCQNs.}}
\label{sec:hcqns}
The essence of HCQNs is to hybridize the existing classical network infrastructure to construct the next generation of communication networks, a.k.a. quantum internet, featured with quantum mechanics \cite{Granelli2022}. HCQN can provide more secure message communications than the existing infrastructure due to the \emph{no-cloning} theorem, i.e., quantum messages cannot be cloned, so hackers have no way to eavesdrop without being realized by users.

Essentially, HCQN communications view quantum entanglement behaviors as quantum channels. Encoding messages in a party of a quantum channel leads to a change in the quantum state of the whole channel due to quantum entanglement. If the two parties of a quantum channel are located in different places, then the change in one party affects the other.
This phenomenon is utilized to construct the quantum teleportation protocol \cite{PhysRevLett.70.1895,Rigolin_2005}, whose circuit representation is in \Cref{fig:background-circuit-exampleac}.
In the protocol, a message is encoded in a party of a quantum channel; such behavior affects the whole channel.
Then, the protocol decodes the other party to access the message. In this process, the information and quantum resources in the first party disappear, so the no-cloning theorem is preserved.
In quantum teleportation, the message being encoded is a quantum one, and people utilize protocols such as superdense coding \cite{PhysRevLett.69.2881}, to generate a quantum message that encodes classical information.

HCQN communications have certain physical limitations. First, quantum qubits cannot be relocated, i.e., one cannot send out a qubit directly. Instead, it is the information in the qubit being sent out.
In addition, there is a few patterns for permitting the remote controls of quantum qubits, i.e., there is a serious restriction to apply a quantum operation to a qubit in a distinct location.
For example, the rate of building a quantum channel connecting three remote locations is very low and has only lab experiment significance \cite{de_Jong_2023}.
In addition, the no-cloning restriction can be violated very easily in the setting of many classical operations, such as the classical substitution in \Cref{sec:design-resource}.
Finally, it is worth noting that projective nondeterminism is caused by the mapping from the quantum world to the classical world.
When a quantum channel $c$ is decoded, it generates a pair of classical and quantum residues.
Assume that we decode two quantum channels $c_1$ and $c_2$  with the same content $q=\cmsg{d}.e$, which results in two pairs $\cmsg{c_1}.\cmsg{q}$ and $\cmsg{c_1}.\up{q}$ as well as $\cmsg{c_2}.\cmsg{q}$ and $\cmsg{c_2}.\up{q}$. One cannot recover $q$ through the composition of wrong pairs, i.e., both $(\cmsg{c_1}.\cmsg{q} \odot \up{c_2}.\up{q})$ and $(\cmsg{c_2}.\cmsg{q} \odot \up{c_1}.\up{q})$
are not likely to be equal to $q$ i.e., if $q$ is $n$ qubit, then the probability for $(\cmsg{c_1}.\cmsg{q} \odot \up{c_2}.\up{q})$ to be equal to $q$ is $\frac{1}{2^n}$.

The above summarizes the reasons why many HCQN protocols is originated from the quantum teleportation and superdense coding protocols and have similar structures as the two.
The two algorithms describe network communication within a limited distance.
To permit long-distance communication, the quantum entanglement swap protocol plays a key role in prolonging a quantum channel to permit reaching distinct locations, which is elaborated in \Cref{sec:enswaps}.

\begin{wrapfigure}{r}{4.3cm}
\includegraphics[width=0.35\textwidth]{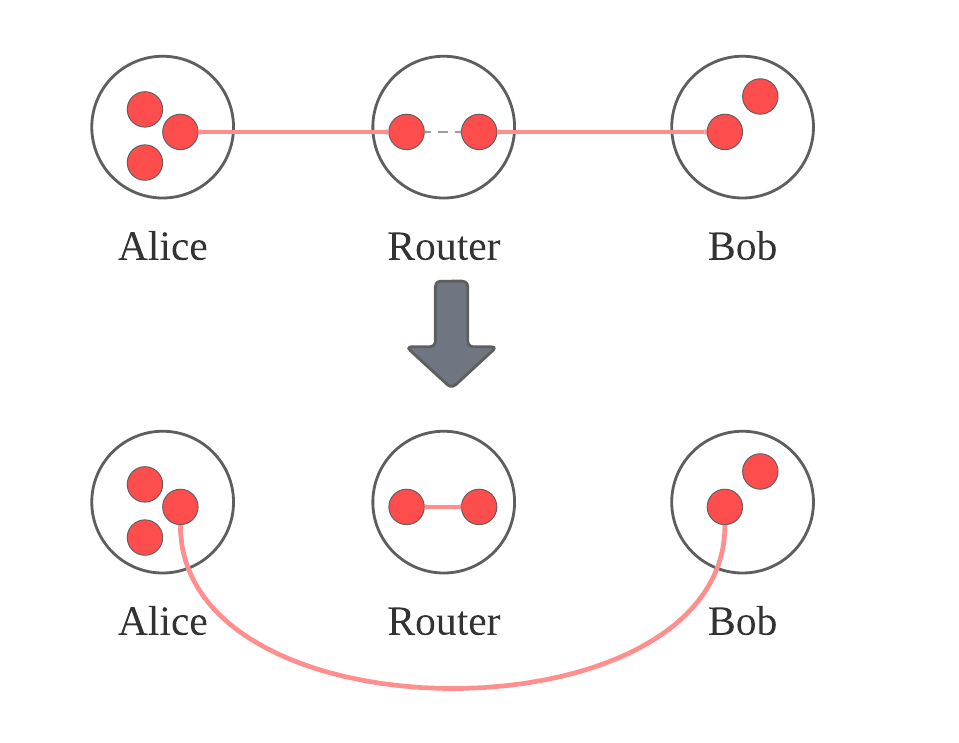}
\caption{Routing Swaps}
\label{fig:background-circuit-examplec}
\end{wrapfigure} 

\noindent\textbf{\textit{Real-world HCQN Protocols.}}
When constructing the HCQNs in real-world networks, more issues are coming out, and researchers investigated different aspects of the real-world HCQNs and came up with different real-world HCQN protocols to overcome the difficulties in certain aspects, such as proposing QPass and QCast \cite{10.1145/3387514.3405853}, to permit long-distance remote quantum communications.
There are three important assumptions about real-world routing networks.
First, locations in an HCQN can be far away, so a direct channel cannot be established between two locations that are far away.
There is usually a routing graph structure to guide the possible quantum channel establishment between two locations.
Far away locations must use QES (\Cref{sec:enswaps}) to prolong a quantum channel so that the two locations are possibly connected.
Second, a quantum message communication between two locations can have many different message transmission paths,
but one path might be the best choice among these paths, based on a choice property, such as the shortest path or maximizing the success probability rate to transmit the message.
The success rate of message transmission can be below $1$ because quantum computers are noisy, so the quantum channel has a certain chance of failure for conveying a quantum message.
Finally, quantum channels have a short lifetime, so it is unlikely that an established quantum channel can be stalled for a long time to wait to transmit a message.
Thus, many real-world protocols assume a set of quantum messages are waiting to be sent, and a quantum channel is established to send a pre-defined quantum message; our Intention IDs and Intention ID maps capture such behavior.

 \section{Superdense Coding Example}\label{sec:superdense}

\begin{figure}[h]
{\centering
    \includegraphics[width=0.6\textwidth]{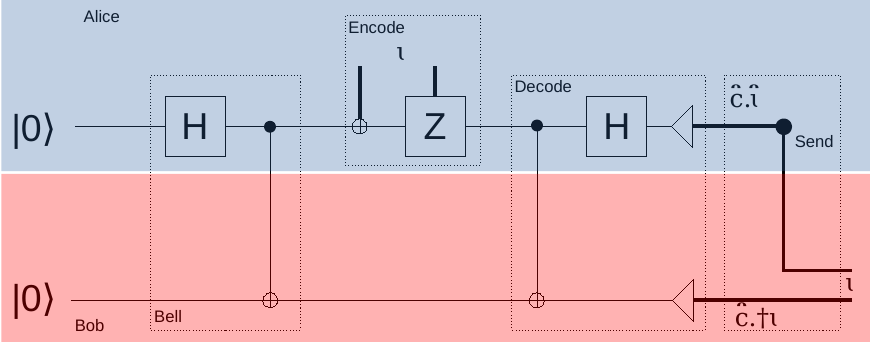}
            \caption{Superdense Coding Circuit\newline}
            \label{fig:background-circuit-supera}
}
\end{figure}

Superdense coding is the protocol to encode a classical message into a quantum channel, which can then be used as a quantum message.
It is the example usage of rules \rulelab{Com} and \rulelab{CLocal} is the circuit implementation of the superdense coding protocol, shown in \Cref{fig:background-circuit-supera}  \footnote{Superdense coding assumes to have an extra step of physically sending a qubit from Alice to Bob in the middle of the circuit. In the circuit implementation, we remove the step and let Alice send a classical message to Bob in the end.}, which shows how a piece of classical information can be encoded into a quantum channel and decodes the channel to extract the classical information content. 

\begin{example}[Superdense Coding Transitions]\label{def:example5}

{\small
\[
\begin{array}{ll}
&\pard{\bscell{\pard{\seq{\downa{c}}{\seq{ \encode{c}{\iota}}{\seq{\decode{c}{x}}{\csenda{a}{x}0}}}}{\circ}}}
{\bscell{\pard{\seq{\seq{\downa{c}}{\seq{\creva{c}{y}}{\seq{\creva{a}{z}}{\textcolor{red}{\encode{y}{z}}}}}}{\textcolor{red}{\creva{y}{w}0}}}{\circ}}}
\\[0.5em]
\xrightarrow{c}
&\pard{\bscell{\pard{\seq{ \encode{c}{\iota}}{\seq{\decode{c}{x}}{\csenda{a}{x}0}}}{c.\circ}}}
{\bscell{\pard{\seq{{\seq{\creva{c}{y}}{\seq{\creva{a}{z}}{\encode{y}{z}}}}}{\creva{y}{w}0}}{c.\circ}}}
\\[0.5em]
\longrightarrow
&\pard{\bscell{\pard{\seq{\decode{c}{x}}{\csenda{a}{x}0}}{c.\iota}}}
{\bscell{\pard{\seq{{\seq{\creva{c}{y}}{\seq{\creva{a}{z}}{\encode{y}{z}}}}}{\creva{y}{w}0}}{c.\circ}}}
\\[0.5em]
\equiv
&\parll{\bscell{{{\seq{\decode{c}{x}}{\csenda{a}{x}0}}}}}{c.\iota}
{\bscell{\seq{{\seq{\creva{c}{y}}{\seq{\creva{a}{z}}{\encode{y}{z}}}}}{\creva{y}{w}0}}}
\\[0.5em]
\xrightarrow{c.\iota}
&\pard{\bscell{{{{\csenda{a}{\cmsg{c}.\cmsg{\iota}}0}}}}}
{\bscell{\pard{\seq{{{\seq{\creva{a}{z}}{\encode{\cmsg{c}}{z}}}}}{\creva{\cmsg{c}}{w}0}}{\cmsg{c}.\up{\iota}}}}
\\[0.5em]
\xrightarrow{a.\cmsg{c}.\cmsg{\iota}}
&
\textcolor{red}{
\pard{\bscell{{{{0}}}}}
{\bscell{\pard{\seq{{{{\encode{\cmsg{c}}{\cmsg{c}.\cmsg{\iota}}}}}}{\creva{\cmsg{c}}{w}0}}{\cmsg{c}.\up{\iota}}}}
}
\\[0.5em]
\longrightarrow
&
\textcolor{red}{
\pard{\bscell{{{{0}}}}}
{\bscell{\pard{\creva{\cmsg{c}}{w}0}{\cmsg{c}.(\cmsg{c}.\cmsg{\iota}\odot\up{\iota})}}}
}
\\[0.5em]
\equiv
&\pard{\bscell{{{{0}}}}}
{\bscell{\pard{\creva{\cmsg{c}}{w}0}{\cmsg{c}.\iota}}}
\\[0.5em]
\longrightarrow
&\pard{\bscell{{{{0}}}}}
{\bscell{0}}
\end{array}
\]
}
\end{example}

We show the QAM configuration and transitions above.
In the final step, Alice and Bob both receive classical residues $\cmsg{c}.\cmsg{\iota}$ and $\cmsg{c}.\up{\iota}$,
meaning that the classical information is split into two messages.
After that, two extra steps are marked red in \Cref{def:example5} but not shown in \Cref{fig:background-circuit-supera}. The first extra step merges the two classical residues with the original classical information by applying an encoding operation. After that, the encoded classical information exists as a resource molecule in the system. Since its content is classical, we can apply rule \rulelab{CLocal} to localize the classical information, being used as data in the local processes.

In the marked black part of the superdense coding transitions,
the first few rule applications are similar to the ones in the quantum teleportation transitions above.
After the decoding rule, we have the classical residue $\cmsg{c}.\cmsg{\iota}$ for Alice (the left membrane), and another classical residue $\cmsg{c}.\up{\iota}$, living as a molecule, for Bob (the right membrane).
Alice then sends $\cmsg{c}.\cmsg{\iota}$ to Bob, and Bob encodes it to the molecule $\cmsg{c}.\up{\iota}$.
The encoding step is a classical one. It combines the two classical residues and recovers the message $\iota$.
Finally, Bob applies rule \rulelab{CLocal} to localize the result classical message $\iota$.

\section{Theorem Proof Outlines} \label{sec:qamsemproof1}

Here, we provide an outline of theorems in the paper.

\subsection{No-cloning Theorem Proof} \label{sec:qamsemproof1a}

The first theorem is \Cref{def:no-cloning-good}, restated below.

\begin{theorem}[No-cloning Assurance]\label{def:no-cloning-gooda}\rm

For any membrane $\scell{\pard{\overline{R}}{\overline{\phi}}}$ in a QAM configuration $\overline{P}$, for any quantum message $q$ located as a resource molecule in $\overline{\phi}$, any transition $\overline{P} \to \overline{Q}$ does not substitute and copy $q$ in $\overline{P}$ more than once.
\end{theorem}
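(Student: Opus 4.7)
The plan is to prove \Cref{def:no-cloning-gooda} by structural induction on the derivation of the transition $\overline{P} \to \overline{Q}$, performing a case analysis on the last rule applied. The overall strategy is to observe that the QAM's process--resource separation confines any potential duplication to a small, controlled set of rules, and then to check each such rule individually.

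First, I would fix notation: given the target quantum resource molecule $q$ residing in some membrane $\scell{\pard{\overline{R}}{\overline{\phi}}}$, I would define a counting function $\#_q(\overline{P})$ that counts occurrences of $q$ as a resource molecule (not as a bound channel name or as a process-local pattern) across all membranes of $\overline{P}$. The theorem then reduces to showing $\#_q(\overline{Q}) \le \#_q(\overline{P})$ for every one-step transition, and moreover that no rule duplicates $q$ inside a \emph{single} substitution. Rules that clearly cannot affect $q$ (namely \rulelab{ID1}, \rulelab{ID2}, \rulelab{Split}, \rulelab{CL}, \rulelab{CR}, \rulelab{NT}, \rulelab{Com}, and the airlock equation \rulelab{AirEQ}) dispatch immediately, since they only rearrange molecules or act on purely classical data. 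The \rulelab{Decohere} rule only decreases $\#_q$.

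Next come the rules that actually touch quantum resources. For \rulelab{Cohere}, the consumed molecules are blank $\circ$, so $q$ itself is untouched. For \rulelab{Encode}, the resource $\alpha.\mu_2$ is replaced by a single combined resource $\alpha.(\mu_1 \odot \mu_2)$; no duplication. For \rulelab{Decode}, the airlocked channel $c.\mu$ becomes a single residue $\cmsg{c}.\up{\mu}$ in the right membrane plus a classical residue substituted into the left process -- here the classical substitution $[\cmsg{c}.\cmsg{\mu}/x]$ may indeed produce multiple copies of $x$, but since $\cmsg{c}.\cmsg{\mu}$ is by construction classical (the projective channel carrying the classical shadow $\cmsg{\mu}$), no quantum resource is cloned. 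The most delicate case is \rulelab{QLocal}: here the rule consumes the resource molecule $\alpha.q$ and replaces a \emph{single} syntactic occurrence of $\alpha$ in the process by $\alpha.q$, so the resource migrates into the process at exactly one position; I would need a small auxiliary lemma stating that \rulelab{QLocal} is explicitly a single-occurrence replacement (unlike the multi-occurrence substitution $[\iota/x]$ in \rulelab{CLocal}), which is precisely the design point the paper emphasises in \Cref{sec:quantummessage}. The rule \rulelab{MT} ($\comp{R} \to \pard{R}{\comp{R}}$) duplicates a process but not its surrounding resources, so channel names may be duplicated in processes while the associated resource molecules remain single; this is harmless by the separation principle.

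The main obstacle will be formalising what counts as ``substituting $q$'' in the presence of rule \rulelab{QLocal}, since this rule at first glance looks like a substitution that could mention $q$ more than once if $\alpha$ appeared multiple times in the target process. I would handle this by requiring, as part of the rule's well-formedness side condition (already implicit in the informal presentation), that \rulelab{QLocal} acts on exactly one occurrence of $\alpha$; an alternative, equivalent formulation is to read the rule as consuming one $\alpha.q$ and producing one localised copy, making its linearity manifest. Once this is pinned down, the inductive case for transitions under structural congruence and across parallel membranes follows routinely, since $\equiv$ is generated by rules that preserve $\#_q$ and parallel composition never moves resources between membranes. This yields $\#_q(\overline{Q}) \le \#_q(\overline{P})$ with equality of substitution multiplicity on $q$ bounded by one, which is the statement of the theorem.
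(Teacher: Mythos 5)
Your proposal is correct and takes essentially the same approach as the paper's proof: rule induction over the QAM semantic rules, dismissing the rules that cannot touch quantum resources and checking that \rulelab{QLocal} replaces only a single occurrence, that \rulelab{CLocal} and \rulelab{Com} substitute only classical data, that \rulelab{Encode} merges via $\odot$ rather than substituting, and that \rulelab{MT} duplicates processes but never resource molecules. If anything you are slightly more careful than the paper, which omits \rulelab{Decode} from its case list (whose substitutions $[\cmsg{c}.\cmsg{\mu}/x]$ and $[\cmsg{c}/y]$ you correctly note are purely classical) and leaves the single-occurrence reading of \rulelab{QLocal} implicit rather than isolating it as an auxiliary lemma.
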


\begin{proof}\label{def:no-cloning-goodg}\rm
The proof is done by rule induction on the QAM semantic rules. There are five cases involving substitution that might copy a quantum message $q$ more than once and we analyze them one by one as follows:

\begin{itemize}

\item The application of rule \rulelab{QLocal} substitutes a specific channel $\alpha$ with a quantum message $\alpha.q$ labeled with the same channel $\alpha$. Clearly, it does not create a copy of the message $q$.

\item The application of rule \rulelab{CLocal} substitutes variable $x$ with a classical message $\iota$, so it does not involve quantum messages.

\item The application of rule \rulelab{Encode} computes a meet operation between the existing channel content $\mu_1$ in the channel $\alpha$ and a new message $\mu_2$ that might be a quantum message.
However, the meet operation does not involve substitution, so it does not create a copy of the message $\mu_2$.

\item The application of rule \rulelab{Com} performs the communication of a classical message $\iota$, so it does not involve quantum messages. 

\item The application of rule \rulelab{MT} duplicates a process $R$, but $R$ is a process, and all quantum messages are assumed to be as a resource module, so the duplication does not create a copy of a quantum message as a resource molecule.

\end{itemize}
\end{proof}

\subsection{Non-Relocation Theorem Proof} \label{sec:qamsemproof1b}

Next, we show the outline for proving \Cref{def:decoding-good}.
The proof of the theorem relies on several lemmas as well as a well-formedness assumption on all channels.
We first state the well-formedness assumption.
In QAM, the label of every quantum or projective channel to be created is uniquely named, as described by the following well-formedness definition.

\begin{definition}[Channel Well-formedness]\label{def:well-formed}\rm
A QAM configuration $\overline{P}$ is well-formed if the following two conditions are satisfied:

\begin{itemize}
\item A quantum/projective channel in membrane resource molecules appears no more than twice.
\item Every pair of channel creation operations creates a unique quantum channel name than other channels appearing in other pairs and membrane resources.
\end{itemize}
\end{definition}

Given a QAM configuration $\overline{P}$, if it satisfies the well-formedness definition, the evaluation of the configuration does not result in an ill-formed state.
We now show a lemma below, stating that only a decoder in a QAM configuration can erase a quantum resource from a membrane:

\begin{lemma}[Decoding Uniqueness Property]\label{def:decoding-gooda1}\rm
For or any membrane $P=\scell{\pard{\overline{R}}{\overline{\phi}}}$ in a QAM configuration $\overline{P}$, for a quantum resource labeled with $c$ in $\overline{\phi}$,
only a decoding operation can erase $c$ from the membrane $P$.

\end{lemma}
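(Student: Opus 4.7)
The plan is to prove \Cref{def:decoding-gooda1} by rule induction on the QAM one-step semantics, taking the well-formedness assumption (\Cref{def:well-formed}) as a standing invariant so that each channel name $c$ appears in at most one pair of resource molecules across the whole configuration. For each reduction and equational rule in \Cref{sec:qamsem,fig:q-pi-semantics1}, I would verify that the resource multiset $\overline{\phi}$ of the membrane $P$ either retains a molecule of the form $c.\mu$ (possibly with $\mu$ updated), or, if the multiset shape changes, that the change was produced by \rulelab{Decode}.

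First I would dispatch the rules that only rewrite processes and never touch resource molecules: \rulelab{CL}, \rulelab{CR}, \rulelab{MT}, \rulelab{NT}, and the classical communication rule \rulelab{Com}. Next, I would handle the rules whose effect on resources is additive or label-preserving. \rulelab{Cohere} only introduces new $c$-labelled blanks rather than erasing any. \rulelab{Encode} rewrites $\alpha.\mu_2$ to $\alpha.(\mu_1\odot\mu_2)$, so the label $c$ is preserved whenever $\alpha=c$; and \rulelab{QLocal}, \rulelab{CLocal} act by substitution inside a process, leaving the resource multiset literally unchanged. The equational rules \rulelab{ID1}, \rulelab{ID2}, and \rulelab{Split} preserve every molecule up to the chosen partition of $\overline{M}$, so the $c$-labelled resource is retained (in one of the two split membranes). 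The equation \rulelab{AirEQ} is slightly more delicate: it fuses $c.\mu_1$ and $c.\mu_2$ from the two membranes into a single airlock carrier $c.(\mu_1\odot\mu_2)$, but the airlock still records the carrier in both membranes, so the resource is not erased from $P$ in the sense of the lemma.

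The two genuinely interesting cases are \rulelab{Decohere} and \rulelab{Decode}, and I expect \rulelab{Decohere} to be the only real obstacle. The rule $\phi \longrightarrow \emptyset$ does delete a resource molecule; the lemma as stated would therefore be literally false unless decoherence is set aside as a physical noise event rather than as a communication step. The plan is to state the lemma with the standard qualification ``up to \rulelab{Decohere}'' and observe that decoherence destroys information rather than transferring it, so it is compatible with the Non-Relocation Theorem (\Cref{def:decoding-good}) that subsequently rules it out as a possible ``transfer'' witness by requiring an encoding in $Q$ to reconstruct $c$. For \rulelab{Decode}, a straightforward inspection of the rule schema shows it rewrites the airlocked resource $c.\mu$ into fresh $\cmsg{c}$-labelled residues and substitutes fresh names into the two continuations; well-formedness ensures this is the only occurrence of $c$ in the configuration, so after the reduction no molecule in $P$ still carries the label $c$, completing the case analysis.

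Finally, I would wrap the inductive step into the standard closure under the congruence and equational rules: since \rulelab{AirEQ}, commutativity, and associativity of the multiset do not create or destroy $c$-labelled molecules, any derivation that erases $c$ from $P$ must contain, at some point, a use of \rulelab{Decode} (or, excluded by hypothesis, \rulelab{Decohere}) applied to the unique pair of $c$-labelled resources. The main obstacle, to repeat, is the correct statement and handling of \rulelab{Decohere}; once that is treated as a separate noise event, the rest is a mechanical enumeration.
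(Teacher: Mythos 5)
Your proposal follows the same route as the paper's proof: a rule-by-rule induction over the one-step semantics, checking that every rule either leaves the $c$-labelled molecule in place or is \rulelab{Decode}. The difference is one of care rather than strategy, and it cuts in your favour. The paper's case analysis enumerates only five rules (\rulelab{QLocal}, \rulelab{CLocal}, \rulelab{Encode}, \rulelab{Com}, \rulelab{Decode}) and silently omits \rulelab{Decohere}; you are right that the instance $c.\mu \longrightarrow \emptyset$ of that rule erases a $c$-labelled resource with no decoding anywhere in sight, so the lemma as literally stated is false and needs the ``up to \rulelab{Decohere}'' qualification you propose (or an explicit restriction to decoherence-free runs, which is also what \Cref{def:decoding-good} implicitly needs, since a decohered-and-then-recreated $c$ would otherwise be a spurious ``relocation''). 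Your additional checks of \rulelab{Cohere}, \rulelab{Split}, \rulelab{AirEQ} and the purely process-level rules are routine but worth recording, because the paper's five-case list does not argue that it exhausts the rules touching resource molecules, and \rulelab{AirEQ} in particular does rewrite the $c$-labelled molecules and deserves the one-line justification you give it. The only adjustment I would ask for is to promote the \rulelab{Decohere} caveat from a remark inside the proof to a hypothesis of the lemma itself, since otherwise the statement you are proving is not the statement written down.
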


\begin{proof}\label{def:decoding-gooda1-pf}\rm
The proof is done by rule induction on the QAM semantic rules. There are five cases involving a quantum channel $c$ and we analyze them one by one as follows:

\begin{itemize}

\item The application of rule \rulelab{QLocal} substitutes a specific channel $\alpha$ with a quantum message $\alpha.q$ labeled with the same channel $\alpha$ in the process $R$. After the rule application, the quantum resource labeled with the channel $c$ is still in the membrane $P$, since the process $R$ applying rule \rulelab{QLocal} is in $P$.

\item The application of rule \rulelab{CLocal} substitutes variable $x$ with a classical message $\iota$, so it does not involve quantum channels.

\item The application of rule \rulelab{Encode} might push a quantum resource labeled with channel $c$ to another quantum resource $\alpha$, but $\alpha$ is located in $P$, so the channel $c$ is still in $P$.

\item The application of rule \rulelab{Com} performs the communication of a classical message $\iota$, so it does not involve quantum messages. 

\item The application of rule \rulelab{Decode} can erase a channel $c$, but it is valid in our lemma. 

\end{itemize}
\end{proof}

We then need to show that an encoding operation can turn a pair of correlated classical and quantum residues into a quantum resource labeled with a proper channel.

\begin{lemma}[Quantum Channel Generation]\label{def:decoding-gooda2}\rm
For or any membrane $P=\scell{\pard{\overline{R}}{\overline{\phi}}}$ in a QAM configuration $\overline{P}$, to generate a quantum channel $c$ in the membrane $P$,
we can either perform a quantum channel creation operation; or an encoder where it is performed by a process $R=\seq{\encode{\alpha}{\alpha.\cmsg{c.\mu}}}{R'}$ and the resource being encoded is $\phi=\alpha.\up{c.\mu}$ with $\alpha$ being projective channels.

\end{lemma}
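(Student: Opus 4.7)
The plan is to prove this by rule induction on the QAM semantic rules, identifying all rules that can introduce a quantum channel labeled $c$ as a resource molecule $c.\mu$ in a membrane. At the outset I would invoke the channel well-formedness assumption of \Cref{def:well-formed} so that the label $c$ is uniquely associated with the point where it was minted; this prevents coincidental reuse of the label $c$ from masquerading as a fresh introduction, and it lets me treat each step as either genuinely producing $c$ or not.

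The first case is \rulelab{Cohere}, which is the straightforward channel creation rule. When two processes with matching $\downa{c}$ actions in adjacent membranes synchronize, a pair of blank quantum channel resources $c.\circ$ are created, one in each membrane. This directly supplies the first disjunct of the lemma.

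The main case is \rulelab{Encode}, where a process $\seq{\encode{\alpha}{\mu_1}}{R'}$ adjacent to a resource $\alpha.\mu_2$ reacts to produce $\alpha.(\mu_1 \odot \mu_2)$. For this to equationally reduce to a resource of the form $c.\mu$, I would appeal to the algebraic properties of $\odot$ stated in \Cref{def:theoremassem}. Combining the nesting identity $\cmsg{c'}.(\cmsg{c'}.\iota \odot \mu') \equiv \cmsg{c'}.\iota \odot \cmsg{c'}.\mu'$ with the projective-pair merging identity $\cmsg{c'}.\cmsg{q}\odot\cmsg{c'}.\up{q} \equiv q$, one sees that the only shape that collapses the projective wrapper to yield a raw quantum channel $c.\mu$ is when $\alpha$ is a projective channel together with $\mu_1 = \alpha.\cmsg{c.\mu}$ and $\mu_2 = \up{c.\mu}$; indeed, then $\alpha.(\alpha.\cmsg{c.\mu} \odot \up{c.\mu}) \equiv \alpha.\cmsg{c.\mu} \odot \alpha.\up{c.\mu} \equiv c.\mu$, which is precisely the second disjunct.

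It remains to rule out every other rule as a source of a $c$-labeled resource. The rules \rulelab{QLocal} and \rulelab{CLocal} only move content from a resource into a process; \rulelab{Decode} consumes a quantum channel and produces only projective residues (as already used in \Cref{def:decoding-gooda1}); \rulelab{Com} is purely classical; and the structural or non-generative rules \rulelab{Split}, \rulelab{Decohere}, \rulelab{CL}, \rulelab{CR}, \rulelab{MT}, \rulelab{NT}, together with the equation \rulelab{AirEQ}, cannot mint a freshly labeled channel resource. I expect the main obstacle to be the \rulelab{Encode} case: it requires an exhaustive case analysis on the shapes of $\mu_1$ and $\mu_2$, normalising under the commutativity, identity, and nesting identities of $\odot$, to confirm that no other combination (in particular, no iterated stacking of projective wrappers) can equationally collapse to $c.\mu$ except the form identified above. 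A structural induction on $\mu_1$ and $\mu_2$, oriented by those identities, should make this enumeration tractable and complete the proof.
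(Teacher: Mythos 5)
Your proposal is correct and takes the same approach the paper does---rule induction on the QAM semantic rules with a case analysis singling out \rulelab{Cohere} and \rulelab{Encode} as the only generative rules; in fact the paper's own proof of this lemma is a one-line appeal to that induction, so your working out of the \rulelab{Encode} case via the nesting and projective-pair identities of $\odot$ (matching the reconstruction steps in \Cref{def:example4a} and \Cref{def:example7}) supplies strictly more detail than the paper records.
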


\begin{proof}\label{def:decoding-gooda2-pf}\rm
The proof is done by rule induction on the QAM semantic rules with case analyses.
\end{proof}

We are now ready to prove the whole theorem below.

\begin{theorem}[Non-Relocation]\label{def:decoding-gooda}\rm
For any $P=\scell{\pard{\overline{R}}{\overline{\phi}}}$ in a QAM configuration $\overline{P}$, for a quantum resource labeled with $c$ in $\overline{\phi}$,
if it is transformed to another membrane $Q$ in $\overline{P}$, then

\begin{itemize}

\item There must be a decoding operation to erase $c$ in $P$ first.
\item After the decoding, there is an encoding in $Q$ to reconstruct $c$ in $Q$'s resource fields.

\end{itemize}
\end{theorem}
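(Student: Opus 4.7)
The plan is to prove the theorem as a direct consequence of the two lemmas just established, using the well-formedness assumption of \Cref{def:well-formed} to rule out a degenerate case, and then tracing causal provenance through the evaluation to link the decoding step to the encoding step.

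For the first bullet, I would invoke \Cref{def:decoding-gooda1} directly. Since $c$ is assumed to be transformed out of $P$'s resource field $\overline{\phi}$, there must exist a transition step in the evaluation $\overline{P} \to^* \overline{P}'$ (where $\overline{P}'$ no longer carries $c$ as a resource molecule in $P$) that deletes $c$ from $P$. By \Cref{def:decoding-gooda1}, the only rule capable of erasing a quantum channel label from a membrane's resource multiset is \rulelab{Decode}, hence a \rulelab{Decode} transition on $c$ must occur. This step simultaneously removes $c$ from $P$ and from the opposite party of the airlocked channel, producing the classical residue $\cmsg{c}.\cmsg{\mu}$ substituted into the decoding process and the quantum residue resource $\cmsg{c}.\up{\mu}$ in the other membrane.

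For the second bullet, I would apply \Cref{def:decoding-gooda2} to the appearance of $c$ in $Q$'s resources. The lemma offers two cases: either $c$ arose from a fresh quantum channel creation operation (rule \rulelab{Cohere}), or it arose from an encoder $\seq{\encode{\alpha}{\alpha.\cmsg{c.\mu}}}{R'}$ acting on a resource $\alpha.\up{c.\mu}$ with $\alpha$ projective. The first case is ruled out by \Cref{def:well-formed}: since $c$ was already present as a resource label in $P$ prior to the transformation, any new creation of a channel named $c$ in $Q$ would violate the uniqueness clause on channel creation. Hence the encoder case must hold, and the encoder must take the specified form with $\alpha = \cmsg{c}$.

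The main obstacle will be establishing the causal link that the encoding in $Q$ is genuinely downstream of the decoding in $P$, rather than being an accidental match. To close this gap, I would argue by induction on the evaluation trace that the only way a resource molecule of shape $\cmsg{c}.\up{\mu}$ can enter any membrane is as the right-hand output of an application of \rulelab{Decode} on the airlocked channel $c.\mu$, since no other semantic rule in \Cref{fig:q-pi-semantics1}, nor \rulelab{Encode}, \rulelab{QLocal}, or \rulelab{CLocal}, introduces a fresh projective-channel-labeled residue $\cmsg{c}.\up{\mu}$ out of thin air. Symmetrically, the classical message $\cmsg{c}.\cmsg{\mu}$ consumed by the encoder can only have originated in the left-hand substitution of that same \rulelab{Decode} step and travelled to $Q$ via a sequence of \rulelab{Com} and \rulelab{CLocal} applications. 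By the uniqueness of $\cmsg{c}$ under \Cref{def:well-formed}, this decoding is precisely the one identified in the first bullet. A final application of the meet-algebra equations of \Cref{def:theoremassem}, namely $\cmsg{c}.(\cmsg{c}.\iota \odot \mu) \equiv \cmsg{c}.\iota \odot \cmsg{c}.\mu$ and $\cmsg{c}.\cmsg{q} \odot \cmsg{c}.\up{q} \equiv q$, confirms that the encoder indeed reconstructs $c.\mu$ as a resource in $Q$, exactly mirroring the computation in \Cref{def:example4a}.
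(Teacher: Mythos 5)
Your proposal follows essentially the same route as the paper's own proof: \Cref{def:decoding-gooda1} forces the erasure of $c$ from $P$ to be a \rulelab{Decode} step, and \Cref{def:decoding-gooda2} together with the uniqueness clause of \Cref{def:well-formed} rules out re-creation via \rulelab{Cohere}, leaving only the encoder case. The causal-provenance argument you add (tracing $\cmsg{c}.\up{\mu}$ and $\cmsg{c}.\cmsg{\mu}$ back to the same \rulelab{Decode} application) is a sound and worthwhile tightening of a link the paper's outline leaves implicit, but it does not change the overall structure of the argument.
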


\begin{proof}\label{def:decoding-gooda-pf}\rm
By \Cref{def:decoding-gooda1}, we can see that only a decoder can allow erasing the channel $c$. The other rule applications do not erase the channel so it is enough to only discuss the case when a decoding operation is applied. In this case, we can analyze the result of applying a decoder and see that for any quantum channel $c$ or $\alpha.c$, after applying a decoder, it creates projective channels $\cmsg{c}$ or $\cmsg{alpha}$, as well as two residues, are produced.

With the above assumption, we can prove the theorem by rule step induction on the number of steps for applying a QAM semantic rule on a QAM configuration $\overline{P}$.
To create a channel $c$ in Q, by \Cref{def:decoding-gooda2}, there are only two possible operations: a channel creation or an encoder.

\begin{itemize}

\item The operation cannot be a channel creation operation because of the well-formedness assumption. We have already had the channel $c$ existed in the system, any newly created channel cannot have the same name.

\item If the operation is an encoder, by \Cref{def:decoding-gooda2}, it then reconstructs a quantum resource labeled with the channel $c$. 

\end{itemize}
\end{proof}

\subsection{Refinement Theorem Proof} \label{sec:qamsemproof1c}

We now show the theorem that connects QPass and QCast below.

\begin{theorem}[QPass and QCast Trace Refinement]\label{def:traceeq2a}\rm
Given $\xi$, $\theta(\overline{p})$, and an initial configuration $C\in \Cs_d$ and the QPass and QCast systems as stated in \Cref{sec:case-qpass},
there is an assignment $\overline{p}$, such that (QPass,$C$) $\sqsubseteq_r$ (QCast,$C$).

\end{theorem}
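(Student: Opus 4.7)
The plan is to exhibit a concrete assignment $\overline{p}$ under which the two protocols validate exactly the same transition steps on $C$, which forces $\Tts(\mathrm{QPass},C)=\Tts(\mathrm{QCast},C)$ and therefore the refinement holds. A natural choice is a uniform assignment $p_i=p^\ast$ for all edges, for any fixed $p^\ast\in(0,1)$. Under such an assignment, along any simple path from a source $g$ to a destination $h$ in the graph $\theta(\overline{p})$, the success likelihood (viewed as the product of per-edge rates, as the paper describes it) is a strictly decreasing function of the number of edges. Consequently the set $\cn{max}(\theta(\overline{p}),\xi(\cn{im}(c)))$ coincides with $\cn{shortest}(\theta(\overline{p}),\xi(\cn{im}(c)))$ for every intention.

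With this identity of path sets in hand, I would prove by rule induction on the extended QAM transition rules (\rulelab{Trans}, \rulelab{Chan}, \rulelab{Msg}) that, for every $C'\in\Cs_d$ and every label $\ell$, we have $C' \xrightarrow{\ell}_{\{\xi,\varphi_p,\theta(\overline{p})\}} C''$ iff $C' \xrightarrow{\ell}_{\{\xi,\varphi_c,\theta(\overline{p})\}} C''$. The \rulelab{Trans} case is immediate since no quantum channel label is involved and neither predicate is consulted. For \rulelab{Chan} and \rulelab{Msg}, the side conditions reduce to membership in $\cn{shortest}(\theta(\overline{p}),\xi(\cn{im}(c)))$ versus $\cn{max}(\theta(\overline{p}),\xi(\cn{im}(c)))$, which agree by the previous paragraph; the success rate $\theta(\overline{p})(g,h)=p^\ast$ (or $1$ in the \rulelab{Msg} case) is then identical in both systems.

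From this step-by-step equivalence, a straightforward induction on the structure of traces (following \Cref{def:dqamtraces}) yields $\Tts(\mathrm{QPass},C)=\Tts(\mathrm{QCast},C)$. To conclude $(\mathrm{QPass},C)\sqsubseteq_r(\mathrm{QCast},C)$, it suffices to observe that for any trace $\sigma$ the relation $\sigma\preceq\sigma$ holds by reflexivity of $\le$ on success rates, so that $\Tts(\mathrm{QPass},C)\subseteq_r\Tts(\mathrm{QCast},C)$ by \Cref{def:ptracesub}, as required by \Cref{def:ptracerefine}.

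The main obstacle I anticipate is the first paragraph: the paper's definitions of $\cn{shortest}$ and $\cn{max}$ are left semi-informal, and care is needed regarding tie-breaking when several paths share the same length or the same product of rates. A clean way to sidestep this is to further perturb $p^\ast$ slightly so that no two simple paths have the same length (e.g., by replacing uniform weights with weights $p^\ast_i=p^\ast\cdot(1-\epsilon_i)$ for generic $\epsilon_i$ small enough that the ordering of path probabilities still agrees with the ordering by length); this preserves the equality of shortest and maximum-probability path sets while ensuring both sets are singletons, eliminating any ambiguity in the side conditions of \rulelab{Chan} and \rulelab{Msg}.
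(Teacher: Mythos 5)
Your proposal is correct and follows essentially the same route as the paper: the paper also picks a uniform assignment (all rates set to $0.5$) so that the maximum-success-rate path coincides with the shortest path, making $\varphi_p$ and $\varphi_c$ agree and hence the trace sets coincide. Your version is somewhat more careful --- generalizing to any uniform $p^\ast\in(0,1)$, spelling out the rule induction and the reflexivity step for $\preceq$, and flagging the tie-breaking ambiguity in $\cn{shortest}$ and $\cn{max}$ that the paper's one-line argument glosses over --- but the core idea is identical.
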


\begin{proof}\label{def:traceeq2a-pf}\rm
The proof of \Cref{def:traceeq2} is done by rule induction (QPass) on the configuration of $C$, for any QPass transition on $C$, we can uniformly set the probabilistic success rate $\overline{p}$ to be all $0.5$, in this case, the maximized success rate path and the shortest path for any path between two locations (defined in $\xi$) are the same. Thus, the predicates $\varphi_p$ and $\varphi_c$ are the same, so that (QPass,$C$) $\sqsubseteq_r$ (QCast,$C$).

\end{proof}

\section{Extend the QAM for Security Protocols} \label{sec:qamsecurity}

\newcommand{\tget}{\texttt{get}}
\newcommand{\tstart}{\texttt{start}}
\newcommand{\tfst}{\texttt{fst}}
\newcommand{\tsnd}{\texttt{snd}}
\newcommand{\tucom}[1]{\texttt{ucom}~{#1}}
\newcommand{\tif}{\texttt{if}}
\newcommand{\tthen}{\texttt{then}}
\newcommand{\telse}{\texttt{else}}
\newcommand{\tlet}{\texttt{let}}
\newcommand{\tin}{\texttt{in}}
\newcommand{\transs}[3]{[\!|{#1}|\!]^{#2}_{#3}}

Many quantum security protocols, such as BB84 \cite{BENNETT20147}, have built in the concept of encrypting the quantum state by using some additional quantum gates, such as Pauli gates below:

{
\begin{center}
$
    X\,i = \begin{pmatrix}0 & 1\\1 & 0\end{pmatrix},\quad
    Y\,i = \begin{pmatrix}0 & -i\\i & 0\end{pmatrix},\quad
    Z\,i = \begin{pmatrix}1 & 0\\0 & -1\end{pmatrix},
$
\end{center}
}

Here, The index $i$ in the gates refers to the index position of applying the gate on a qubit.
In the QAM, we can extend the encoding operation $\encode{\alpha}{\mu_1}$ to support the security encryption concept by permitting an operation as $\encode{\alpha}{G\,i}$ where $G\in\{X,Y,Z\}$.
This operation means that we apply gate $G\,i$ on the $i$'s position of the quantum resource state labeled as $\alpha$.
The semantics of the encoding operation will remain the same, but we need to modify the operation in the meet operation $\odot$, so that we can apply a gate $G\,i$ as follows:

{
\begin{mathpar}

   \inferrule[Encode]{}
       {\scell{\pard{\pard{\seq{\encode{\alpha}{G\,i}}{R}}{\alpha.\mu_2}}{...}}
             \longrightarrow {\scell{\pard{\pard{R}{\alpha.(G\,i(\mu_2))}}{...}} }}
  \end{mathpar}
}

The modified $\odot$ operation is resposible for applying $G\,i$ on the quantum state $\mu_2$ as $\alpha.(G\,i(\mu_2))$.
With the modification, we will be able to define the security encryption such as the encryption in BB84.
We will also need to extend the decoding to support the measurement in different bases.
The support of analysis for different quantum security protocols will be a future study of the QAM, and we show the possibility here by showing how quantum gates can be merged in the meet operation in the QAM.

\section{The QAM and Quantum Circuits} \label{sec:qamsemproof}

We intend to define the QAM to capture the HCQN behaviors; thus, we can expect that the QAM is instantiated to a quantum programming language \footnote{We do not intend to design the QAM as a programming language, but a process algebra for defining HCQN behaviors with programming language features.}.
Here, we show an instance of the QAM can be compiled to a concurrent quantum circuit language, Concurrent \sqir \cite{VOQC,pythonsqir,sdnquantum}, which is largely built on top of qCCS \cite{10.1145/1507244.1507249} and CQP \cite{10.1145/1040305.1040318}.
The compilation essentially translates an instance of the QAM to a multi-threaded language with the \sqir as a quantum circuit library.
Essentially, Concurrent \sqir can be viewed as a multiset of concurrent processes arranged as:

{
\begin{center}
$
S_1\mid S_2 \mid S_3 \mid ... \mid S_n.
$
\end{center}
}

Each process $S_i$ is a sequential circuit program, each operation of which can be either a quantum circuit operation written in \sqir \cite{VOQC} or a classical operation, such as a classical C-like synchronizer.
Each process also contains a location name indicating the membrane it belongs to.

We compile a specific instance of the QAM to Concurrent \sqir by instantiating abstract syntactic entities in the QAM.
We first instantiate the QAM membranes and quantum resources mentioned in a QAM program with fixed names (identifiers)
that distinguish these entities with each other, such as $\scell{\overline{M}}_g$ (\Cref{fig:q-pi-semantics2}).
Additionally, Quantum channels ($c$) are instantiated as objects with two location names $g$ and $h$ (we use $\cn{lft}(c)$ for the first location and $\cn{rht}(c)$ for the second location.), identifying the two membranes that a quantum channel is connecting.
Essentially, locations refer to quantum qubit array regions. For example, in quantum teleportation in \Cref{def:example1}, the total number of qubits is $3$ so the qubit array in the system is a length-$3$ qubit array. Alice's two qubits are in the region $[0,1]$, while Bob's qubit is in the region of $[2,3)$. 
Then, The membrane location for Alice refers to the index $0$, the starting position of Alice's qubit region, and the membrane location for Bob refers to the index $2$.

Since the QAM molecules can be either processes or quantum resources, the QAM compiler generates not only a concurrent program but also quantum states. We simplify the compilation by assuming only one compiled target quantum state $\varphi$, and every quantum operation in the compiled concurrent program is applied on some qubit locations in $\varphi$. 
In the compilation, we translate the QAM operations that manipulate quantum resources, such as quantum channel creations, encoding quantum messages, and quantum channel decoding, to quantum circuits written in \sqir, classical operations that manipulate classical resources to classical functions, and quantum resources to a quantum state. 
Each named quantum resource is translated to a qubit array that resides as a section in the state,
where we ensure any two different quantum resources are disjoint.
The QAM compiler is expressed as the following two judgments:

{
\begin{center}
$
\Omega;\Sigma;g\vdash R \gg S
\qquad
\Omega;\Sigma\vdash \overline{P} \gg (\overline{S},\varphi)
$
\end{center}
}

The first judgment translates a QAM process to a Concurrent \sqir process, and the second translates a QAM program to a multiset of Concurrent \sqir processes and a quantum state.
Here, $\Omega$ is a type environment mapping from the QAM element parameters, including message names, different channel names, and program variables, to $C$ or $Q$ representing if a parameter is classical or quantum;
$\Sigma$ maps parameters to a pair of two numbers $i$ and $n$ (the first element, $\cn{fst}$, in the pair is $i$ and the second element, $\cn{snd}$, in the pair is $n$). If $g$ and $h$ are the two locations of a channel $c$, then $\Sigma(c.g)$ and $\Sigma(c.h)$ produce the $i$ and $n$ values for the two-qubit location indices of $c$ in $\varphi$, respectively.
$\overline{P}$ is a QAM configuration, $\overline{S}$ is a concurrent multiset of QSDN processes, either a \sqir quantum circuit or a classical function $f$.

We assume that a given global number, $N$, acts as the bandwidth number for a piece of quantum information.
In \Cref{def:example1}, we teleport a unit piece of quantum message $\cmsg{d}.e$, which is not necessarily a single qubit quantum message. In the compilation, we allow users to specify $N$ representing the number of qubits in a unit piece of quantum message, such as $\cmsg{d}.e$. A composed quantum message, such as $c.e$, is proportional to the fixed bandwidth number.
The type environment $\Omega$ determines if a parameter is classical or quantum. 
The compilation procedure fails if a parameter in $R$ and $\overline{P}$ violates the type given in $\Omega$.
$\Sigma$ is an environment recording the qubit bandwidth for a parameter, mapping from parameters to two numbers $i$ and $n$. The first number defines the starting qubit array position for the parameter, and the second number defines the length of the qubit array, proportional to $N$.
Here, we write $\Sigma(x)$ to mean that the starting position $i$ for the qubit array of $x$, and $|\Sigma(x)|$ to refer to the length $n$ of the array.

\ignore{
\begin{figure}[t]
{\centering
\hspace*{0.2em}

\begin{minipage}[b]{.3\textwidth}
                 \includegraphics[width=0.9\textwidth]{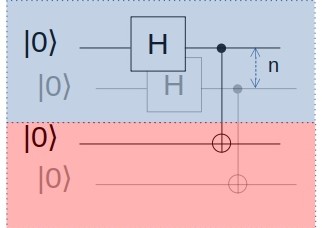}
            \caption{Cohere circuit}
            \label{fig:cohere-circuit}
 \end{minipage}
\hfill
\begin{minipage}[b]{.3\textwidth}
\hspace*{0.2em}
                 \includegraphics[width=0.9\textwidth]{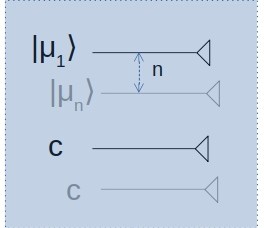}
            \caption{Decode quantum message circuit}
            \label{fig:decodeq-circuit}
 \end{minipage}
 \hfill
\begin{minipage}[b]{.3\textwidth}
\hspace*{0.2em}
                 \includegraphics[width=0.9\textwidth]{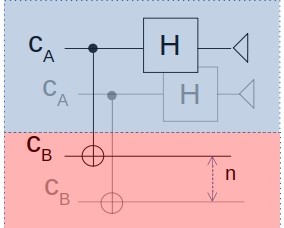}
            \caption{Decode classic message circuit}
            \label{fig:decodec-circuit}
 \end{minipage}
\begin{minipage}[b]{.3\textwidth}
\hspace*{0.2em}
                 \includegraphics[width=0.9\textwidth]{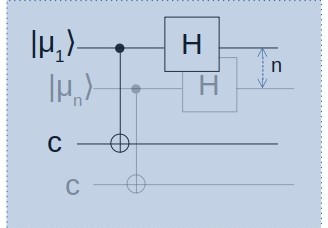}
            \caption{Encode quantum message circuit}
            \label{fig:encode-q-circuit}
 \end{minipage}
 \hfill
\begin{minipage}[b]{.3\textwidth}
\hspace*{0.2em}
                 \includegraphics[width=0.9\textwidth]{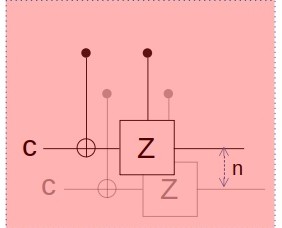}
            \caption{Encode classic message circuit}
            \label{fig:encode-c-circuit}
 \end{minipage}
}
\caption{QAM to circuit compilation}
\label{fig:compilation-circuit}
\end{figure}
}

\begin{figure}[t]
{\footnotesize
  \begin{mathpar}
    \inferrule[C-CohereL]{\cn{lft}(c)=g\\\Sigma(c.g) = i \\ \Sigma(c.\cn{rht}(c)) = j\\\\|\Sigma(c.g)|=|\Sigma(c.\cn{rht}(c))|=n\\ \Omega;\Sigma;g \vdash R \gg S}
        {\Omega;\Sigma;g \vdash \seq{\downa{c}}{R}
             \gg (\forall x \in [0, n) \Rightarrow H\,(i+x);CX\,(i+x)\,(j+x));S}    

    \inferrule[C-CohereR]{\cn{rht}(c)=g\\\Omega;\Sigma;g \vdash R \gg S}
        {\Omega;\Sigma;g \vdash \seq{\downa{c}}{R} \gg S}   

     \inferrule[C-EncodeC]{\Omega(\mu) = C \\ \Sigma(c.g) = i \\|\Sigma(c.g)|=n \\ \Omega;\Sigma;g \vdash  R \gg S}
        {\Omega;\Sigma;g\vdash \seq{\encode{c}{\mu}}{R} \gg (\forall x \in [0, n) \Rightarrow CX\ \mu[x]\ (i+x);CZ\ \mu[x]\ (i+x));S} 

    \inferrule[C-EncodeQ]{\Omega(\mu) = Q \\ \Sigma(c.g) = i \\|\Sigma(c.g)|=n \\ \Omega;\Sigma;g \vdash  R \gg S}
        {\Omega;\Sigma;g \vdash {\seq{\encode{c}{\mu}}{R}} \gg 
      (\forall x \in [0, n) \Rightarrow H\ (i+x);CNOT\ \mu[x]\ (i+x));S}   

    \inferrule[C-DecodeQ]{\Omega(\mu) = Q \\ \Sigma(c.\cn{lft}(c)) = i \\  \Sigma(c.\cn{rht}(c)) = j \\ \Omega;\Sigma;g \vdash R \gg S}
        {\Omega;\Sigma;g \vdash {\seq{\decode{c}{x}}{R}} \gg
             (\forall x \in [0, n) \Rightarrow Meas\ (i+x));\cn{send}(c,\cmsg{c});S} 

    \inferrule[C-Rev]{ \Omega;\Sigma;g \vdash R \gg S}
        {\Omega;\Sigma;g \vdash {\seq{\creva{\delta}{x}}{R}} \gg \cn{wait}(\delta,x);S} 

    \inferrule[C-MemP]{ \Omega;\Sigma;g \vdash R \gg S'\\\Omega;\Sigma \vdash {\scell{{\overline{M}}}_g}\gg(\overline{S},\varphi)}
        {\Omega;\Sigma \vdash {\scell{\pard{R}{\overline{M}}}_g} \gg (S'\mid\overline{S},\varphi)} 
  \end{mathpar}
}
\caption{QAM compilation rules. $\Sigma(x)=\cn{fst}(\Sigma(x))$ and $|\Sigma(x)|=\cn{snd}(\Sigma(x))$.}
\label{fig:compilation-rules}
\end{figure}

To see why $\Sigma$ is needed, recall that in the teleportation example ($\Cref{def:example1}$),
we encode a message $\cmsg{d}.e$ to a quantum channel $c$. Assume that the message $\cmsg{d}.e$ (more specifically, parameter $e$) is translated to a quantum array having length $N$. This means that the channel $c$ must be at least translated to a qubit array of length $2N$ to be able to hold the message; otherwise, the quantum teleportation execution would fail.
In addition, assume that we instead would like to encode a portion of a channel $d.e$ into the quantum channel $c$. If $e$ is a length $N$ array, a portion of the quantum channel $d.e$ has a length $2N$, so the channel $c$ is estimated to have length $4N$, which results in a different situation.

In the compilation, $\Sigma$ is given as an input of the compilation. 
Since the QAM describes nondeterminism in HCQN protocols, when executing the compiled code, 
it is possible that a channel $c$ holds a quantum message $q$ that surpasses the defined bandwidth for $c$.
We classify such a case to be a failure state.
The QAM compilation correctness is defined by the assumption that no failure states can be reached.

The compilation rules are given in \Cref{fig:compilation-rules}.
Rules \rulelab{C-CohereL} and \rulelab{C-CohereR} describe the compilation procedure for quantum channel creation,
respectively representing the behaviors of the two processes living in the two membranes holding the two ends of the channel.
In the compilation procedure, we assume that all the channel creation task is handled by one end (the \cn{lft} end) of the channel,
and the \cn{rht} end does nothing. The channel creation procedure does not require a synchronizer because there are no operations in the QAM that need to surely happen before a quantum channel creation.

Rules \rulelab{C-EncodeC} and \rulelab{C-EncodeQ} describe the encoding procedure when the message being encoded is classical and quantum, respectively. The two kinds of encoding have different implementations at the quantum circuit level, even though they can both be viewed as encoding. Here, we also utilize $\Sigma$ to learn about the number of qubits in the target quantum channel to produce the correct circuit with respect to the correct qubit number.
Rule \rulelab{C-DecodeQ} describes the compilation procedure of decoding a quantum channel with a quantum message content.
There is another similar rule describing the decoding of a quantum channel with classical message content.
In the rule, we first look at the qubit positions of the \cn{lft} end of the quantum channel $c$, then place a list of measurement operations on these positions. The operation $\cn{send}(c,\cmsg{c})$ is a classical synchronizer. It sends a message to the other end of the channel $c$ and notifies the other side that the decoding is finished and the other side can move forward.

Rule \rulelab{C-Rev} is a classical receiver operation acting as the receiver synchronizer.
It waits to receive a classical message, either a projective channel name or a classical message, from the channel $\delta$.
For example, once a \rulelab{C-DecodeQ} rule is applied on a membrane $g$, it sends out a projective channel $\cmsg{c}$ via the quantum channel name $c$. The other end of the channel waits to receive the projective channel. The compiled code synchronizes the \cn{send} and \cn{wait} operations to ensure they happen simultaneously.
Finally, rule \rulelab{C-MemP} describes the compilation of a QAM program based on the compilation procedures of sub-processes.

The QAM compilation correctness is based on a version of the simulation relation with the exclusion of the failure state listed below.
We implement the QAM compilation in Coq and output the code to Ocaml. Then, we test the correctness proposition by running programs in a QAM simulator against the execution of the compiled \sqir program.

\begin{theorem}\label{thm:vqir-compile}\rm[QAM translation correctness]
  Suppose $\overline{P}\xrightarrow{\kappa} \overline{Q}$,
  $\Omega;\Sigma\vdash \overline{P} \gg (\overline{S},\varphi)$ and $\Omega;\Sigma\vdash \overline{Q} \gg (\overline{S'},\varphi')$.
then there is a non-failure \sqir execution, such that $(\varphi,\overline{S}) \longrightarrow^* (\varphi',\overline{S'})$.
\end{theorem}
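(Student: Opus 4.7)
The plan is to proceed by rule induction on the derivation of $\overline{P}\xrightarrow{\kappa}\overline{Q}$, with a simultaneous inversion on the two compilation judgments $\Omega;\Sigma\vdash \overline{P} \gg (\overline{S},\varphi)$ and $\Omega;\Sigma\vdash \overline{Q} \gg (\overline{S'},\varphi')$. For each QAM semantic rule in \Cref{sec:sem} and \Cref{fig:q-pi-semantics1}, I would exhibit the concrete \sqir execution sequence that witnesses $(\varphi,\overline{S}) \longrightarrow^* (\varphi',\overline{S'})$, using the rules of \Cref{fig:compilation-rules} to read off which circuit operations sit at the head of the processes involved in the step.

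First I would handle the easy classical/structural cases: \rulelab{ID1}, \rulelab{ID2}, \rulelab{Split}, \rulelab{CL}, \rulelab{CR}, \rulelab{MT}, \rulelab{NT}, and \rulelab{Com}. These either compile to no quantum operations at all, or purely to classical \cn{send}/\cn{wait} synchronizers from rule \rulelab{C-Rev}, so the state $\varphi$ is unchanged and the simulation amounts to process-level scheduling in Concurrent \sqir. The \rulelab{Decohere} case is also straightforward, since decoherence is modeled by dropping a qubit region from $\varphi$ nondeterministically. Next I would dispatch the four principal quantum cases: \rulelab{Cohere}, \rulelab{Encode}, \rulelab{QLocal}/\rulelab{CLocal}, and \rulelab{Decode}. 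For \rulelab{Cohere}, rules \rulelab{C-CohereL}/\rulelab{C-CohereR} emit a sequence of $H$ and $CNOT$ gates between the index regions $\Sigma(c.g)$ and $\Sigma(c.h)$, which precisely transforms a pair of $\ket{0}$ registers into a (multi-qubit) Bell state matching $c.\circ$ in $\overline{Q}$. For \rulelab{Encode} and \rulelab{Decode}, I would appeal to the standard teleportation/superdense coding circuit correctness (as in \Cref{sec:background} and \Cref{sec:superdense}) to show that the compiled gates compute the meet operation $\odot$ and the residue pair $(\cmsg{c}.\cmsg{\mu},\cmsg{c}.\up{\mu})$ respectively.

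The subtle cases are the airlock equation \rulelab{AirEQ} and the interaction of the meet operation $\odot$ with nested projective channels from \Cref{def:theoremassem}. Here I would first prove a lemma that equational congruence on QAM configurations is preserved by compilation up to reordering of independent processes in $\overline{S}$ and permutation of disjoint qubit regions in $\varphi$, then use this lemma to normalize $\overline{P}$ and $\overline{Q}$ before invoking the principal-case arguments. A second supporting lemma I would need is a bandwidth-adequacy invariant: whenever $\Omega;\Sigma\vdash \overline{P} \gg (\overline{S},\varphi)$ and a quantum channel $c$ in $\overline{P}$ currently holds content $\mu$, then $|\Sigma(c.g)|$ is at least as large as the qubit footprint of $\mu$. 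This invariant is exactly the non-failure precondition assumed in the statement, and it is needed to ensure the $\forall x\in[0,n)$ loops in rules \rulelab{C-EncodeC}, \rulelab{C-EncodeQ}, and \rulelab{C-DecodeQ} apply enough gates to realize the full encode/decode.

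The main obstacle will be the \rulelab{Decode} case combined with the projective-nondeterminism behavior described in \Cref{sec:design-project}. On the QAM side the transition is a single labeled step producing a key--lock residue pair, whereas the compiled circuit performs a genuine quantum measurement whose classical outcome is nondeterministic; showing that every non-failure branch of the measurement corresponds to some QAM target configuration (and that the equational laws $\cmsg{c}.\cmsg{q}\odot\cmsg{c}.\up{q}\equiv q$ hold on the circuit side after the recovery encode) is the heart of the argument. I would factor this into a dedicated \emph{residue-coherence lemma} stating that, for any measurement outcome, applying the recovery circuit produced by \rulelab{C-EncodeQ} on $\cmsg{c}$ with the outcome bits restores the original state on the receiver's register, so that the final state $\varphi'$ after the entire compiled sequence matches the compilation of $\overline{Q}$. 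Together with the rule-induction skeleton and the two supporting lemmas, this yields the desired simulation.
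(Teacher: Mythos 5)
Your proposal follows essentially the same route as the paper's own proof outline in \Cref{appx:main}: rule induction on the QAM semantic rules, grouping the cases into structural/classical steps, channel creation via the $H$/$CNOT$ Bell-pair circuit of \rulelab{C-CohereL}, encode/decode justified by standard teleportation-circuit correctness, and treating the \rulelab{AirEQ} equations and the $\odot$ laws as properties validated on the quantum state. The only difference is that you make explicit, as named lemmas, the bandwidth-adequacy invariant, the congruence-preservation step, and the residue-coherence argument for measurement nondeterminism, all of which the paper's outline assumes implicitly (via its failure-state convention and its appeal to the teleportation measurement result) rather than isolating.
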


\subsection{A Proof Outline of Theorem \Cref{thm:vqir-compile}}\label{appx:main}

The theorem is proved by rule induction on the semantic rules of QAM. We assume that the translated circuit allows an $n$ qubit width channel to convey quantum messages. 

\myparagraph{Channel Creation} For the \rulelab{Cohere} rule, assuming that processes $\overline{M}$, $\overline{N}$, $R$ and $T$, as well as the other membranes not mentioned in the system are properly translated, the rule application is valid only if we have the following conditions:

\begin{enumerate}

\item The pair of channel names $c$ (the two ends of channel $c$) is uniquely defined in the system, and the two ends of channels are classified as $\texttt{lft}$ and $\texttt{rht}$.

\item There are two pieces of available quantum resources $\circ$, which are empty at the beginning.

\end{enumerate}

The translated circuit has a generalized Bell pair circuit (possibly $n$ qubit width) connecting the pair $\texttt{lft}(c)$ and $\texttt{rht}(c)$. With the availability, the system properly creates a generalized Bell pair as a quantum channel.

\myparagraph{Quantum State Equations}
Every equation appearing in the \rulelab{AirEQ} defines an equivalence property appearing in a quantum state; thus, they are automatically valid in a quantum state.

\myparagraph{Quantum Encoding Operations} For any QAM program $P$, rules \rulelab{QLocal} and \rulelab{CLocal} are substitution rules. Their correctness depends on the inductive step of proving the substituted variables in $P$ with the given subterm. Rule \rulelab{Encode} installs a quantum message into a channel, which has two different cases:

\begin{enumerate}

\item If the message encoded is a classical message, which is verified by the predicate $\Omega(\mu)=C$ in rule \rulelab{C-EncodeC}, the generated circuit properly installs the classical message $\mu$ into a quantum channel.

\item If the message encoded is a quantum message, which is verified by the predicate $\Omega(\mu)=Q$ in rule \rulelab{C-EncodeQ}, the generated circuit properly installs the message $\mu$ into a quantum channel $c$, provided that the width of the quantum channel $c$ is large enough to hold the message.

\end{enumerate}

The final result of the encoding is defined by the operation $\odot$, whose equational rules are validated by quantum state properties in Concurrent SQIR. 

\myparagraph{Quantum Decode Operations}
Assuming that processes $\overline{M}$, $\overline{N}$, $R$ and $T$, as well as the other membranes not mentioned in the system are properly translated, the application of the \rulelab{Decode} rule can be divided into two cases:

\begin{enumerate}

\item If the message decoded is a classical message, which is verified by the predicate $\Omega(\mu)=C$ in rule \rulelab{C-DecodeC}, the generated circuit applies a series of $CX$ ans $CZ$ gates, with a measurement operation in the end. It extracts a classical message out of a quantum channel.

\item If the message encoded is a quantum message, which is verified by the predicate $\Omega(\mu)=Q$ in rule \rulelab{C-DecodeQ}, we then apply two series of measurement operations on the two ends of a channel $c$. The result is the same as the measurement result of a quantum teleportation circuit, which creates a pair of a classical residue $\cmsg{c}.\cmsg{q}$ and a quantum residue $\cmsg{c}.\up{q}$, such that the combination of the two residues recovers the message $q$.

\end{enumerate}

\myparagraph{Other Operations}
The classical message communication is translated to a classical message communication, similar to the $\Pi$-calculus communication operation that we define on top of Concurrent SQIR, the choice operation is also a classical $\Pi$-calculus style choice operation that we define on top of Concurrent SQIR, and the Concurrent SQIR semantics of the replication operations are also from $\Pi$-calculus. The other operations are syntactic rules appearing in the QAM level and they are properly translated to equivalent processes in Concurrent SQIR.

\end{document}